\newcommand{\rme}{{\mathrm{e}}}
\newcommand{\E}{\mathbb{E}}
\newcommand{\EE}{\mathcal{E}}
\newcommand{\figsize}{0.5}
\newcommand{\f}{\text{F}^{-1}}
\newcommand{\avg}{\text{avg}}
\newcommand{\R}{R^\ast}
\newcommand{\ssnr}{\text{\scriptsize{SNR}}}
\newcommand{\tSNR}{\text{\footnotesize{SNR}}}
\newcommand{\Rmnum}[1]{\expandafter\@slowromancap\romannumeral #1@}
\newcommand{\y}{\mathbf{y}}
\newcommand{\x}{\mathbf{x}}
\newcommand{\n}{\mathbf{n}}
\newcommand{\ben}{\frac{E_b}{N_0}}
\newcommand{\tsnr}{{\text{\footnotesize{SNR}}}}
\newcommand{\db}{\text{dB}}
\newtheorem{Lem}{Theorem}
\newtheorem{Rem}{Remark}
\begin{document}

\begin{spacing}{1.6}

%
\title{Energy Efficiency of Hybrid-ARQ under Statistical Queuing Constraints}

\author{Yi Li, Gozde Ozcan, M. Cenk Gursoy and Senem
Velipasalar
\thanks{The authors are with the Department of Electrical
Engineering and Computer Science, Syracuse University, Syracuse, NY, 13244
(e-mail: yli33@syr.edu, gozcan@syr.edu, mcgursoy@syr.edu, svelipas@syr.edu).}
\thanks{The material in this paper was presented in part at the Annual Conference on Information Sciences and Systems (CISS), Princeton University, Princeton, NJ,  in March 2014.}
}

\maketitle

\begin{abstract}
In this paper, energy efficiency of hybrid automatic repeat request (HARQ) schemes with statistical queuing constraints is studied for both constant-rate and random Markov arrivals by characterizing the minimum energy per bit and wideband slope. The energy efficiency is investigated when either an outage constraint is imposed and (the transmission rate is selected accordingly) or the transmission rate is optimized to maximize the throughput. In both cases, it is also assumed that there is a limitation on the number of retransmissions due to deadline constraints. Under these assumptions, closed-form expressions are obtained for the minimum energy per bit and wideband slope for HARQ with chase combining (CC). Through numerical results, the performances of HARQ-CC and HARQ with incremental redundancy (IR) are compared. Moreover, the impact of source variations/burstiness, deadline constraints, outage probability, queuing constraints on the energy efficiency is analyzed.
\end{abstract}
\thispagestyle{empty}

\begin{IEEEkeywords}
Chase combining, energy efficiency, hybrid ARQ, incremental redundancy, Markov arrivals, minimum energy per bit, QoS constraints, wideband slope.
\end{IEEEkeywords}


\section{Introduction}
 In wireless communications, increasing transmission rates, improving energy efficiency, reducing delays, and guaranteeing reliable and robust data transmission are key considerations with often contradictory requirements in terms of the use of limited resources. For instance, increasing rates, reducing delays, establishing robust communication links may lead to increased energy consumption, hurting energy efficiency. Moreover, due to the influence of noise, fading, multipath propagation and Doppler frequency shift, the performance of wireless systems is highly sensitive to mobility and changes in the environment. The automatic repeat request (ARQ) and forward error correction (FEC) are two kinds of widely used schemes applied in order to ensure reliable delivery of data in such challenging wireless channel conditions. While ARQ facilitates the retransmission of erroneously received data packets with feedback from the receiver to the transmitter, FEC schemes enable the correction of transmission errors without retransmission by adding redundancy to the data. In order to provide better error correction performance and lower implementation cost, ARQ and FEC schemes are combined to develop hybrid ARQ (HARQ) \cite{wicker1995error}. HARQ protocol has the ability to adapt the transmission rate to time-varying channel conditions with limited channel side information (CSI) at the transmitter. In HARQ with chase combining (HARQ-CC) and HARQ with incremental redundancy (HARQ-IR) schemes, the corrupted packets are not deleted but rather stored and combined in the next transmission. A very detailed study on the performance of HARQ-CC and HARQ-IR protocols was provided in \cite{HARQ_Gaussian}, in which the throughput was characterized following an outage probability analysis. Also, the throughput analysis of HARQ-CC and HARQ-IR schemes subject to an outage constraint has been conducted in \cite{fix_outage}. In addition to reliability, energy efficiency is another concern in wireless communications, due to limited battery power in mobile systems, growing energy demand, and high energy costs as well as environmental concerns. The energy efficiency of HARQ protocols has been addressed recently. For instance, the energy efficiency of HARQ-CC and HARQ-IR schemes for delay insensitive systems was studied in \cite{stanojev}.

In addition, many wireless applications require certain quality-of-service (QoS) guarantees for acceptable performance levels at the end-user, especially in delay sensitive scenarios, such as live video transmission, interactive video (e.g., teleconferencing), and mobile online gaming. In such cases, effective capacity can be employed to characterize the system throughput under statistical queuing constraints \cite{dapeng}, which require the buffer overflow probabilities to decay exponentially fast asymptotically as the buffer threshold grows without bound. In the presence of such QoS constraints, it is critical to evaluate the performance of HARQ schemes since they involve retransmissions. With this motivation, the authors in \cite{choi} analyzed the impact of different power allocation schemes on energy per bit and effective transmission delay of HARQ-IR in a multiuser downlink channel. Moreover, the recent work in \cite{choi2} mainly focused on the performance comparison between adaptive modulation and coding (AMC) and HARQ-IR in terms of energy efficiency under QoS constraints. The authors considered the notion of effective capacity and applied it to AMC. The performance of HARQ-IR was analyzed under a QoS constraint described in terms of packet loss probabilities. Recently, we in \cite{HARQ} employed the effective capacity formulation and provided a characterization of the effective capacity of HARQ under statistical queuing constraints.

In effective capacity analysis, constant-rate arrivals are assumed at the transmitter. On the other hand, randomly time-varying arrivals are frequent in real applications. For instance, the data traffic can be regarded as an ON-OFF process in voice communications (e.g., in VoIP) and variable bit-rate video traffic is statistically characterized as autoregressive, Markovian, or Markov-modulated processes \cite{survey-VBRvideotraffic}. With this motivation, the authors in \cite{SB_QoS1} studied the impact of source burstiness on the energy efficiency under statistical queuing constraints, and they further developed energy-efficient power control policies in \cite{SB_QoS2} considering Markov arrivals.


In this paper, we study the energy efficiency of HARQ under statistical queuing constraints in the low power and low QoS exponent regimes for both constant-rate and random arrival models. More specifically, our contributions are the following:
\begin{enumerate}
  \item We characterize the throughput of HARQ-CC and then derive closed-form minimum energy per bit and wideband slope expressions in the presence of statistical QoS constraints while satisfying a target outage probability.
  \item Our initial analysis addresses constant-rate arrivals\footnote{These early results were also reported in the conference version \cite{EE_HARQ} of our paper.}. Subsequently, we extend our analysis to random arrival models. More specifically, we consider ON-OFF discrete Markov and Markov fluid sources, and ON-OFF Markov modulated Poisson sources (MMPS). Analytical characterization are obtained for any type of channel fading (while numerical results consider Rayleigh and Nakagami fading.)
  \item We identify the impact of random arrivals and source burstiness on the energy efficiency of HARQ systems under statistical QoS constraints.
  \item Following our results for a given fixed outage probability, we determine the energy efficiency when throughput-maximizing transmission rates are employed.
\end{enumerate}

The remainder of the paper is organized as follows. In Section \ref{sec:system_model}, we describe the system model and the operational characteristics of the HARQ schemes. Preliminary concepts and formulations regarding statistical queuing constraints, throughput, and energy efficiency are introduced in Section \ref{sec:prep}. Energy efficiency of HARQ-CC is studied in detail in Section \ref{sec:EE_fixed_outage} for both constant-rate and random arrival models. In Section \ref{sec:energy-eff-optimal-rate}, we investigate energy efficiency with optimal transmission rates. Finally, numerical results are given in Section \ref{sec:numerical} and the paper is concluded in Section \ref{sec:conc}. Proofs are relegated to the Appendix.

\section{System Model} \label{sec:system_model}

\begin{figure}
\center
\includegraphics[width=\figsize\textwidth]{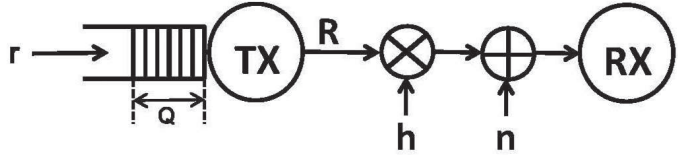}
\caption{System Model}\label{fig:model}
\end{figure}

In this paper, as depicted in Figure \ref{fig:model}, we consider a point-to-point wireless communication system, in which data packets arriving from the source are initially stored in a buffer at the transmitter before being sent over a fading channel to a receiver. We assume a block flat-fading model in which the fading coefficients stay the same within one block, but change independently across blocks. Each fading block is assumed to have a duration of $m$ symbols. Throughout the paper, we use subscript $i$ as the discrete time index. Under these assumptions, the received signal in the $i^{th}$ block can be written as
\begin{align}
\y_i=h_i \x_i+\n_i \hspace{0.5cm} i= 1,2, \dots
\end{align}
Above, $\x_i$  and $\y_i$  are the transmitted and received signal vectors of length $m$, respectively, and $h_i$ denotes the channel fading coefficient in the $i^{th}$ block. Also, $\n_i$ represents the noise vector with independent and identically distributed (i.i.d.) circularly-symmetric, zero-mean Gaussian components, each with variance $N_0$. Then, the instantaneous capacity (bits/s/Hz) in the $i^{\text{th}}$ block is given by
\begin{align}
C_i= \log_2(1+\tSNR z_i),
\end{align}
where $z_i=|h_i|^2$ is the magnitude-square of the fading coefficient, $\tSNR =\frac{\EE}{N_0}$ denotes signal-to-noise ratio, and $\EE$ is the average energy of each component of the  transmitted vector $\x_i$.

To guarantee the reliability of the system, we assume that the system employs HARQ scheme with fixed transmission rate $R$ (bit/s/Hz).
If the receiver decodes the received packet correctly, it sends an acknowledgment (ACK) feedback to the transmitter through an error free feedback link, and a new packet will be sent in the next time block. If the receiver cannot decode the packet, a retransmission request is sent through the feedback link, and another codeword block of the same packet will be sent in the next time block. Retransmission continues until the receiver gets the packet without error or if the limit on the number of retransmissions is reached, and then the corresponding packet will be removed from the buffer at the transmitter.

In the HARQ-IR scheme, additional information is sent in each retransmission and the receiver combines and decodes after the $M^{\text{th}}$ round of retransmissions without error only if $R$ satisfies
\begin{align}
R \leq \sum_{i=1}^M\log_2(1+\tSNR z_i).\label{r_HARQTR}
\end{align}

In the HARQ-CC scheme, the same coded data is transmitted in each retransmission. The receiver employs maximum-ratio-combining and decodes the data packet error-free after the $M^{\text{th}}$ round only if $R$ satisfies
\begin{align}
R \leq \log_2\left(1+\tSNR\sum_{i=1}^M z_i\right).\label{r_HARQCC}
\end{align}

Although the transmitter always sends information at a fixed rate, HARQ protocol effectively leads to rate adaptation depending on when the data is correctly decoded at the receiver. For instance, when the channel conditions are favorable, the transmission of a single packet can be completed within a few blocks, resulting in a relatively large average transmission rate, and vice versa if the channel conditions are poor. If the transmission of a single packet is completed in $N$ fading blocks, then from (\ref{r_HARQTR}) and (\ref{r_HARQCC}), one can easily show that the average transmission rate $R/N$ is bounded as,
\begin{align}\label{eq:R_bound}
\begin{cases}
\frac{1}{N}\sum_{i=1}^{N-1}\log_2(1+\tSNR z_i)< \frac{R}{N} \\
\hspace{1.5cm}\leq \frac{1}{N}\sum_{i=1}^N\log_2(1+\tSNR z_i) \hspace{1cm}\text{for HARQ-IR} \\
\frac{1}{N}\log_2\left(1+\tSNR\sum_{i=1}^{N-1} z_i\right)< \frac{R}{N} \\
\hspace{1.5cm}\leq \frac{1}{N}\log_2\left(1+\tSNR\sum_{i=1}^N z_i\right) \hspace{0.7cm}\text{for HARQ-CC},
\end{cases}
\end{align}
which implies that both HARQ-IR and HARQ-CC have the ability to adapt the average transmission rate to the channel conditions without requiring perfect channel side information (CSI) at the transmitter.

\section{Preliminaries}\label{sec:prep}

\subsection{Statistical Queuing Constraints and System Throughput}
Throughout this paper, we assume that the transmitter is operating under a queuing constraint, which requires the buffer overflow probability to decay exponentially fast, i.e.,
\begin{align}
\Pr\{Q \ge q\} \approx e^{-\theta q}, \label{eq:bufferconstraint}
\end{align}
for sufficiently large $q$, where $Q$ is the stationary queue length, $q$ is the overflow threshold, and $\theta$ is called the QoS exponent. More rigorously, QoS exponent $\theta$ is defined as
\begin{align}
\theta=\lim_{q \to \infty} \frac{-\log \Pr\{Q \ge q\}}{q}.
\end{align}
Note that $\theta$ is a factor that controls the exponential decay rate of the buffer overflow probability.
Indeed, a closer approximation for the overflow probability is given by \cite{dapeng}
\begin{gather}
\Pr\{Q \ge q\} \approx \varsigma e^{-\theta q} \label{eq:overflowprob-rev}
\end{gather}
where $\varsigma = \Pr\{Q > 0\}$ is the probability of non-empty buffer.
From (\ref{eq:overflowprob-rev}), we notice that higher values of $\theta$ indicate stricter limitations on the buffer overflow probability, leading to more stringent QoS constraints whereas lower values of $\theta$ represent looser QoS requirements. Conversely, for a given buffer threshold $q$ and overflow probability limit $\epsilon = \Pr\{Q \ge q\}$, the desired value of $\theta$ can be determined as
\begin{gather}
\theta = \frac{1}{q}\log_e \frac{\varsigma}{\epsilon}.
\end{gather}

The system throughput can be characterized as the maximum average arrival rate $r_{\text{avg}}$ that can be supported under statistical queuing constraints, described by (\ref{eq:overflowprob-rev}). According to the effective bandwidth and effective capacity formulations provided in \cite{chang} and \cite{dapeng}, respectively, in the presence of queuing constraints with QoS exponent $\theta$, the arrival process $a_i$ and departure process $c_i$ at the buffer should satisfy
\begin{align}
\Lambda_a(\theta)+\Lambda_c(-\theta)=0,\label{eq:qos}
\end{align}
where $\Lambda_p(\theta)=\lim_{t\to \infty}\frac{1}{t}\log_e\;\E\{e^{\theta\sum_{i=1}^{t}p_i}\}$ is the asymptotic logarithmic moment generating function (LMGF) of the random process $p_i$.

When the arrival rate is constant i.e., $a_i = a$ for all $i$, it can be easily seen that
\begin{align}
\Lambda_a(\theta) = a \theta.
\end{align}
Then, from (\ref{eq:qos}), we have
\begin{align}
a = -\frac{1}{\theta} \Lambda_c(-\theta). \label{eq:eff-cap-formula-constant-rate}
\end{align}\
Indeed, the right-hand side of (\ref{eq:eff-cap-formula-constant-rate}) is defined as the effective capacity of the wireless link \cite{dapeng}
\begin{align}
C_E(\theta, \tSNR)=-\frac{1}{\theta}\Lambda_c(-\theta),
\end{align}
characterizing the maximum constant arrival rate that can be supported by the time-varying wireless transmission rates while satisfying the statistical queueing constraint in (\ref{eq:overflowprob-rev}). Notice that under the constant-rate arrival assumption, the system throughput (or equivalently the maximum average arrival rate) is also given by the effective capacity:
\begin{equation}\label{eq:eff-cap-def}
r_{\text{avg}}(\theta,\tsnr)=\E\{a_i\}= a = C_E(\theta, \tSNR)=-\frac{1}{\theta}\Lambda_c(-\theta).
\end{equation}
In \cite{HARQ}, the effective capacity of HARQ-CC and HARQ-IR with fixed transmission rate is studied, and the following closed-form approximate expression is determined for small $\theta$ :
\begin{align}\label{eq:CE_HARQ}
r_{\text{avg}}(\theta,\tsnr) = C_E(\theta, \tSNR)= \frac{R}{\mu_1}-\frac{R^2 \sigma^2}{2\mu_1^3}\theta + o(\theta),
\end{align}
where $R$ denotes fixed transmission rate, $\mu_1$ and $\sigma^2$ are the mean and variance of $\hat{T}$, the total duration of time that has taken to successfully send one message.

When the arrival rate is not constant, the computation of the system throughput is more complicated. In general, we need to formulate the LMGF of the arrival process as a function of the average arrival rate, and obtain the throughput by solving (\ref{eq:qos}).

\subsection{Energy Efficiency Metrics}
As mentioned in the previous subsection, the system throughput is characterized by the average arrival rate $r_{\text{avg}}$. Moreover, we choose energy per bit, defined as
\begin{equation}
 \ben=\frac{\tSNR}{r_{\text{avg}}(\theta, \tSNR)},
\end{equation}
as the metric for energy efficiency under statistical QoS constraints.

In the low-$\tSNR$ regime, the throughput curve is characterized by the minimum energy per bit and the wideband slope \cite{verdu}. The minimum energy per bit is obtained from
\begin{equation}\label{bitenergy}
  \ben_{\rm{min}}=\lim_{\tsnr\rightarrow 0}\frac{\tsnr}{r_{\text{avg}}(\theta, \tsnr)}=\frac{1}{\dot{r}_{\text{avg}}(\theta, 0)}
\end{equation}
where $\dot{r}_{\text{avg}}(\theta,0)$ denotes the first derivative of the system throughput $r_{\text{avg}}(\theta, \tSNR)$ with respect to $\tSNR$ at zero $\tSNR$. Correspondingly, the wideband slope is the slope of the throughput curve at $ \ben_{\rm{min}}$ and is given by
\begin{equation}\label{widebandslop}
  S_0= \frac{-2(\dot{r}_{\text{avg}}(\theta,0))^2}{\ddot{r}_{\text{avg}}(\theta,0)}\log_e 2.
\end{equation}
Above, $\ddot{r}_{\text{avg}}(\theta,0)$ denotes the second derivative of $r_{\text{avg}}(\theta, \tSNR)$\footnote{In the remainder of the paper, especially when $\theta$ is fixed and derivatives with respect to $\ssnr$ are considered, we generally express average arrival rate and effective capacity only as a function of $\ssnr$ explicitly as $r_{\avg}(\ssnr)$ and $C_E(\ssnr)$, respectively, and suppress $\theta$ in order to avoid cumbersome expressions.} with respect to $\tSNR$ at zero $\tSNR$. Then, the throughput can be approximated as
\begin{equation}\label{sp-eff-curve}
 r_{\text{avg}}=\frac{S_0}{10\log_{10}2}\left( \ben_{\db}-\ben_{\rm{min},\db}\right)+\epsilon,
\end{equation}
where $\ben_{\db}=10 \log_{10} \ben$, and $\epsilon=o\left(\ben-\ben_{\rm{min}}\right)$. Hence, $\ben_{\min}$ and $S_0$ provide a linear approximation of the $r_{\avg}$ vs. $\ben$ curve in the vicinity of $\ben_{\min}$.

\section{Energy Efficiency of HARQ-CC scheme with Fixed Outage Probability}\label{sec:EE_fixed_outage}
In this section, we study the energy efficiency of HARQ-CC scheme with fixed outage probability. Initially, we consider constant-rate arrivals, characterize throughput by employing the effective capacity formulation, and derive the minimum energy per bit and wideband slope Subsequently, we incorporate random arrival models by considering discrete-time Markov, Markov fluid, and Markov modulated Poisson sources and determine the system throughput and analyze the energy efficiency again by determining the minimum energy per bit and wideband slope.

\subsection{Energy Efficiency of HARQ-CC with Constant Arrivals}
Before obtaining the minimum energy per bit and wideband slope expressions for HARQ-CC, we first characterize the system throughput of HARQ-CC scheme subject to an outage constraint. An outage event happens if the receiver does not correctly decode the message at the end of the $M^{\text{th}}$ HARQ round. More specifically, the outage probability is expressed as
\begin{equation}\label{eq:outage}
P_{\text{out}} = \Pr\left\{\log_2\left(1+\tSNR\sum_{i=1}^M z_i\right) < R\right\} = \varepsilon,
\end{equation}
where $M$ denotes the limit on the maximum number of HARQ rounds, reflecting the deadline constraint. Correspondingly, the transmission rate that guarantees an outage probability of $\epsilon$ can be expressed as
\begin{equation}\label{r-chase}
 R=\log_{2} \left(1+\f_M(\varepsilon)\tsnr\right),
\end{equation}
where $\f_M$ is the inverse cumulative distribution function (CDF) of $\sum_{i=1}^{M}z_i$. Specifically, for Rayleigh fading, $\frac{2}{\E\{z\}}\sum_{i=1}^{M}z_i$ follows a chi-square distribution with $2M$ degrees of freedom; for Nakagami-$m$ fading, $\sum_{i=1}^{M}z_i$ follows a Gamma distribution with shape parameter $Mm$ and scale parameter $\E\{z\}/m$.

Hence, using the above rate expression and the formulation in (\ref{eq:CE_HARQ}), we can express, for small $\theta$, the throughput of the HARQ-CC scheme subject to an outage constraint $\epsilon$ as
\begin{align}\label{eq:CE_HARQ_CC}
r_{\text{avg}}(\tsnr)=&\frac{\log_{2}(1+\f_M(\varepsilon)\;\tsnr)}{\mu} \notag \\
             &\hspace{1cm}-\frac{\left[\log_{2}(1+\f_M(\varepsilon)\;\tsnr)\right]^2\sigma^2\theta}{2\mu^3}.
\end{align}
In order to obtain the expressions of $\mu$ and $\sigma^2$, we first write the probability $P\{\hat{T}=kM+t\}$ that the transmission of the first $k$ messages have ended in failure due to the violation of the deadline constraint $M$, and the $(k + 1)^{\text{th}}$ message is successfully transmitted after $t \le M$ HARQ rounds as follows:
\begin{align}
\Pr\{\hat{T}=kM+t\}= (\Pr\{T>M\})^k \Pr\{T=t\} \label{eq:hatTdistribution}
\end{align}
where $\hat{T}$ denotes the total duration of time spent for successful message transmission, which includes failed transmissions due to the deadline constraint, and $T$ represents the random transmission time of each message. Above, $\Pr\{T>M\}$ is equal to the outage probability of  $\epsilon$, and $\Pr\{T=t\}$ can be expressed as
\begin{align}
\Pr\{T=t\} &= \Pr\{T\le t\}-\Pr\{T\le t-1\}\\
           &=\Pr\Bigg\{\log_{2}\Big(1+\tsnr\sum^{t}_{i=1}z_i\Big)\geqslant R\Bigg\} \notag\\
           &\hspace{2cm}-\Pr\Bigg\{\log_{2}\Big(1+\tsnr\sum^{t-1}_{i=1}z_i\Big)\geqslant R\Bigg\} \\
          &=\Pr\left\{\sum^{t}_{i=1}z_i\geqslant \f_M(\varepsilon)\right\} -\Pr\left\{\sum^{t-1}_{i=1}z_i\geqslant \f_M(\varepsilon)\right\}    \\
          &=\text{F}_{t-1}\left(\f_M(\varepsilon)\right)-\text{F}_{t}\left(\f_M(\varepsilon)\right)
\end{align}
where $\text{F}_{t}$ is the CDF of $\sum_{i=1}^t z_i$.
Now, (\ref{eq:hatTdistribution}) can be expressed as
\begin{align}
\Pr\{\hat{T}=kM+t\} =\varepsilon^k\;\left(\text{F}_{t-1}\left(\f_M(\varepsilon)\right)-\text{F}_{t}\left(\f_M(\varepsilon)\right)\right).
\end{align}
Having determined the distribution of $\hat{T}$, we can express the expected value and variance of $\hat{T}$. The expected value $\E\{\hat{T}\}=\mu$ can be found as
\begin{align}
   \mu&=\sum^{\infty}_{\hat{t}=1}\hat{t}\;\Pr\{\hat{T}=\hat{t}\} \label{eq:mu_0} \\
      &=\sum^{M}_{t=1}\sum^{\infty}_{k=0}(kM+t)\;\Pr\{\hat{T}=kM+t\} \label{eq:mu_1} \\
       &=\sum^{M}_{t=1}\left(\sum^{\infty}_{k=0} (kM+t)\varepsilon^k\;\Pr\{T=t\}\right)  \\
       &=\sum^{M}_{t=1}\left(t\;\Pr\{T=t\}\sum^{\infty}_{k=0}\varepsilon^k+M\;\Pr\{T=t\}\sum^{\infty}_{k=0}k \varepsilon^k\right) \label{eq:mu_2}\\
       &=\frac{1}{1-\varepsilon}\sum^{M}_{t=1}t\;\Pr\{T=t\}+ \frac{M\varepsilon}{(1-\varepsilon)^2}\sum^{M}_{t=1}\Pr\{T=t\}\label{eq:mu_3} \\
       &=\frac{1}{1-\varepsilon}\sum^{M}_{t=1}t\;\Pr\{T=t\}+ \frac{M\varepsilon}{1-\varepsilon}.\label{eq:mu_4}
\end{align}
Above, in (\ref{eq:mu_1}), we replace $\hat{t}$ by $kM+t$ and sum over both $k$ and $t$ in order to more explicitly address possible violations of maximum retransmission limit before successful packet transmission. Noting that $\sum^{\infty}_{k=0}\varepsilon^k=\frac{1}{1-\varepsilon}$ and $\sum^{\infty}_{k=0}k\varepsilon^k=\frac{\varepsilon}{1-\varepsilon}$, (\ref{eq:mu_2}) can be simplified to (\ref{eq:mu_3}). Notice that ${\sum^{M}_{t=1}\Pr\{T=t\}} = \Pr\{T \leqslant M\}$ represents the probability that the transmission has been completed before violating the deadline constraint $M$, and hence is equal to $1-\varepsilon$. Applying this fact to (\ref{eq:mu_3}), we obtain (\ref{eq:mu_4}).

Similarly, the variance of $\hat{T}$ is given by
\begin{equation}
  \sigma^2=\E\{\hat{T}^2\}-\mu^2
\end{equation}
where
\begin{align}
&\E\{\hat{T}^2\}=\sum^{\infty}_{\hat{t}=1}\hat{t}^2\;\Pr\{\hat{T}=\hat{t}\} \\
&=\sum^{M}_{t=1}\left(\sum^{\infty}_{k=0}\varepsilon^k (kM+t)^2\;\Pr\{T=t\}\right)\label{eq:sgm_1}\\
 &=\frac{1}{1-\varepsilon}\sum^{M}_{t=1}t^2\;\Pr\{T=t\}
                  +\frac{2M\varepsilon}{(1-\varepsilon)^2}\sum^{M}_{t=1}t\;\Pr\{T=t\}\notag\\
                  &\hspace{5.5cm}+\frac{M^2\varepsilon(1+\varepsilon)}{(1-\varepsilon)^2}\label{eq:sgm_2}.
\end{align}

Akin to the steps applied from (\ref{eq:mu_0}) to (\ref{eq:mu_4}), we again sum over $kM+t$ in (\ref{eq:sgm_1}), and then compute several summation terms with respect to $k$. Subsequently, using the fact that $\sum^{M}_{t=1}\Pr\{T=t\}=1-\varepsilon$, we obtain (\ref{eq:sgm_2}).

\begin{Rem}\label{rem1}
For Rayleigh fading, the expressions above can further be simplified using the relationship between the Poisson distribution and chi-square distribution \cite{poisson}. More specifically, the retransmission time $T-1$ follows a Poisson distribution and hence we have
\begin{equation}
  \Pr\{T=t\}=\frac{\lambda^{(t-1)}}{(t-1)!} \rme^{-\lambda} \label{eq:Poissondistribution}
\end{equation}
where $\lambda=\frac{1}{\E\{z\}}\f_{M}(\varepsilon)$. Inserting (\ref{eq:Poissondistribution}) into (\ref{eq:hatTdistribution}), we derive $\Pr\{\hat{T}=kM+t\}$ as
\begin{align}
\Pr\{\hat{T}=kM+t\} =\varepsilon^k\;\frac{\lambda^{(t-1)}}{(t-1)!} \rme^{-\lambda}.
\end{align}
Hence, inserting (\ref{eq:Poissondistribution}) into the expressions of $\mu$ and $\sigma^2$, we can further simplify their expressions as follows:
\begin{align}\label{eq:expected_T}
\mu=\frac{1}{1-\varepsilon}\sum^{M}_{t=1}\;\frac{t\lambda^{(t-1)}}{(t-1)!}\rme^{-\lambda} + \frac{M\varepsilon}{1-\varepsilon},
\end{align}
\begin{align}
\begin{split}\label{eq:sigma_exp}
\sigma^2\!=\!\frac{1}{1-\varepsilon}\sum^{M}_{t=1}\;\frac{t^2\lambda^{(t-1)}}{(t-1)!} \rme^{-\lambda}-\frac{1}{(1-\varepsilon)^2}&\Bigg(\sum^{M}_{t=1}\;\frac{t\lambda^{(t-1)}}{(t-1)!} \rme^{-\lambda}\!\Bigg)^2\\&\hspace{1.2cm}+\frac{M^2\varepsilon}{(1-\varepsilon)^2}.
\end{split}
\end{align}
\end{Rem}

Note that the expressions of $\mu$ and $\sigma^2$ do not depend on $\tSNR$. In the following result, we characterize the energy efficiency in the low $\tsnr$ regime for small $\theta$.
\begin{Lem} \label{theo:discrete}
For small QoS exponent $\theta$, the minimum energy per bit and wideband slope of the HARQ-CC scheme with the outage constraint $\epsilon$ are given, respectively, by
\begin{align} \label{eq:min_Eb}
  &\ben_{\rm{min}}=\frac{\mu\;\log_e 2}{\f_M(\varepsilon)},\\ \label{eq:min_S0}
   &S_0=\frac{2\mu\log_e 2}{\sigma^2 \theta+\mu^2 \log_e 2},
\end{align}
where $\mu$ and $\sigma^2$ are given by (\ref{eq:mu_4}) and (\ref{eq:sgm_2}), respectively.
\end{Lem}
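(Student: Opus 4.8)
The plan is to obtain both quantities by direct differentiation of the throughput expression in (\ref{eq:CE_HARQ_CC}) with respect to $\tsnr$, evaluating at $\tsnr = 0$, and then substituting into the definitions (\ref{bitenergy}) and (\ref{widebandslop}). The crucial structural observation, already emphasized just before the statement, is that $\mu$ and $\sigma^2$ do not depend on $\tsnr$; hence the only SNR dependence in $r_{\avg}$ enters through the logarithmic term, which keeps the derivatives clean.

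First I would introduce the shorthand $g(\tsnr) = \log_2(1 + \f_M(\varepsilon)\,\tsnr)$, so that (\ref{eq:CE_HARQ_CC}) reads $r_{\avg}(\tsnr) = \frac{g(\tsnr)}{\mu} - \frac{\sigma^2\theta}{2\mu^3}\,g(\tsnr)^2$. A short computation gives $g(0)=0$, $g'(0) = \f_M(\varepsilon)/\log_e 2$, and $g''(0) = -\f_M(\varepsilon)^2/\log_e 2$. Differentiating once yields $\dot{r}_{\avg} = \frac{g'}{\mu} - \frac{\sigma^2\theta}{\mu^3}\,g\,g'$, and differentiating again yields $\ddot{r}_{\avg} = \frac{g''}{\mu} - \frac{\sigma^2\theta}{\mu^3}\big((g')^2 + g\,g''\big)$. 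The key simplification is that every term carrying an explicit factor of $g$ vanishes at $\tsnr = 0$ because $g(0)=0$; this is precisely what decouples the $\sigma^2\theta$ correction from the first derivative and isolates it in the second.

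Evaluating at zero then gives $\dot{r}_{\avg}(\theta,0) = \f_M(\varepsilon)/(\mu\log_e 2)$ and $\ddot{r}_{\avg}(\theta,0) = -\frac{\f_M(\varepsilon)^2}{\mu\log_e 2} - \frac{\sigma^2\theta\,\f_M(\varepsilon)^2}{\mu^3(\log_e 2)^2}$. Substituting the first into (\ref{bitenergy}) immediately produces $\ben_{\rm{min}} = 1/\dot{r}_{\avg}(\theta,0) = \mu\log_e 2/\f_M(\varepsilon)$, which is (\ref{eq:min_Eb}). For the slope, I would factor $-\f_M(\varepsilon)^2/(\mu^3(\log_e 2)^2)$ out of $\ddot{r}_{\avg}(\theta,0)$ to write it as $-\frac{\f_M(\varepsilon)^2}{\mu^3(\log_e 2)^2}\big(\mu^2\log_e 2 + \sigma^2\theta\big)$; inserting this together with $(\dot{r}_{\avg}(\theta,0))^2 = \f_M(\varepsilon)^2/(\mu^2(\log_e 2)^2)$ into (\ref{widebandslop}), the factors $\f_M(\varepsilon)^2$ and $(\log_e 2)^2$ cancel and the two negative signs cancel, leaving $S_0 = 2\mu\log_e 2/(\sigma^2\theta + \mu^2\log_e 2)$, which is (\ref{eq:min_S0}).

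There is no genuine analytic obstacle here; the argument is essentially careful bookkeeping in the chain and product rules, and the vanishing of the $g$-factors at the origin is what makes the final forms so compact. The only point worth stating explicitly is that (\ref{eq:CE_HARQ_CC}) is itself the leading-order (small-$\theta$) expansion of the effective capacity in (\ref{eq:CE_HARQ}); since $\mu$ and $\sigma^2$ are exact, SNR-independent constants, differentiating this given expression in $\tsnr$ at $\tsnr=0$ is exact, and the ``small $\theta$'' qualifier in the statement is simply inherited from the throughput formula on which the computation rests.
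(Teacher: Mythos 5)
Your proposal is correct and follows essentially the same route as the paper's own proof: direct differentiation of the throughput expression (\ref{eq:CE_HARQ_CC}) with respect to $\tsnr$, evaluation at $\tsnr=0$ (where the terms carrying a factor of $\log_2(1+\f_M(\varepsilon)\tsnr)$ vanish), and substitution of $\dot{r}_{\avg}(0)=\f_M(\varepsilon)/(\mu\log_e 2)$ and $\ddot{r}_{\avg}(0)=-\f_M(\varepsilon)^2(\sigma^2\theta+\mu^2\log_e 2)/(\mu^3(\log_e 2)^2)$ into (\ref{bitenergy}) and (\ref{widebandslop}). Your bookkeeping via $g(\tsnr)$ is a tidy packaging of exactly the computation in Appendix A, and your evaluated derivatives agree with the paper's (\ref{eq:firstder_Ec_0}) and (\ref{eq:secondder_Ec_0}).
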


\begin{proof}
See Appendix \ref{APD_1}.
\end{proof}

We immediately notice that the minimum energy per bit $\ben_{\rm{min}}$ does not depend on the QoS exponent $\theta$, and hence is not affected by the presence of QoS constraints. On the other hand, via $\mu$ and $\f_M(\varepsilon)$, $\ben_{\rm{min}}$ is a function of the deadline constraint $M$ and the outage limit $\epsilon$. This dependence will be explored in the numerical results. We further notice that the wideband slope $S_0$ diminishes with increasing $\theta$. Hence, stricter QoS constraints lead to smaller slopes, increasing the energy per bit requirements at the same throughput level.

\subsection{Energy Efficiency of HARQ-CC with ON-OFF Discrete-Time Markov Source}\label{SC_DMS}
When the arrival rate $a_i$ is not constant, the computation of the throughput is more involved. Generally, we need to express the LMGFs of the random arrival processes and random departure processes (or equivalently random wireless transmissions), and then solve (\ref{eq:qos}) in order to determine the maximum average arrival rate $r_{\avg}$ that can be supported by the wireless transmissions under statistical queuing constraints. In these cases, derivation of the minimum bit energy and wideband slope only involves the first and second order derivatives of $r_{\text{avg}}$ evaluated at $\tsnr=0$, which can be obtained easily by taking the derivatives of both sides of (\ref{eq:qos}) and letting $\tsnr\rightarrow 0$. In this subsection, we analyze the energy efficiency of HARQ-CC with fixed outage probability when we have ON-OFF discrete-time Markov sources.

In this case, the Markov source only has two states, namely, ON and OFF states. We define state $1$ as the OFF state, in which the source keeps silent. When the source is in ON state, or equivalently state $2$, the arrival rate is $a_i=r\;\text{(bit/s/Hz)}$. The state transition probability matrix of this Markov source can be written as
\begin{align}
\mathbf{G}=\left(
             \begin{array}{cc}
               p_{11} & p_{12} \\
               p_{21} & p_{22} \\
             \end{array}
           \right),
\end{align}
where $p_{11}$ and $p_{22}$ denote the probabilities that the source remains in the same state (OFF and ON states, respectively) in the next time block, and $p_{12}$ and $p_{21}$ are the probabilities that source will transition to a different state in the next time block. Using the properties of Markov processes, we can express the probability of the ON state as
\begin{equation}\label{eq:p_on}
P_{ON}=\frac{1-p_{11}}{2-p_{11}-p_{22}}.
\end{equation}
Then, the average arrival rate of this ON-OFF Markov source is
\begin{equation}
r_{\text{avg}}=r P_{ON}=r \frac{1-p_{11}}{2-p_{11}-p_{22}}.
\end{equation}

Since the departure and arrival processes at the transmitter are independent, the expressions of $\mu$ and $\sigma^2$ in (\ref{eq:mu_4}) and (\ref{eq:sgm_2}) are still valid for this case.
\begin{Lem}\label{theo:discrete2}
For small QoS exponent $\theta$ and ON-OFF discrete-time Markov source, the minimum energy per bit and wideband slope of the HARQ-CC scheme with the outage constraint $\epsilon$ are given, respectively, by
\begin{align}
  &\ben_{\rm{min}}=\frac{\mu\;\log_e 2}{\f_M(\varepsilon)},\\
   &S_0=\frac{2\log_e 2}{\frac{\sigma^2 \theta+\mu^2 \log_e 2}{\mu}+\theta\zeta},
\end{align}
where $\mu$ and $\sigma^2$ are given by (\ref{eq:mu_4}) and (\ref{eq:sgm_2}), respectively, and $\zeta$ is defined as
\begin{align}\label{eta}
\zeta=\frac{(1-p_{22})(p_{11}+p_{22})}{(1-p_{11})(2-p_{11}-p_{22})}.
\end{align}
\end{Lem}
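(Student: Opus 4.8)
The plan is to obtain the throughput $r_{\avg}$ from the QoS balance equation (\ref{eq:qos}) and then read off $\ben_{\rm{min}}$ and $S_0$ from the first two derivatives of $r_{\avg}$ at $\tsnr=0$ via (\ref{bitenergy}) and (\ref{widebandslop}). The first task is the arrival-side LMGF. For the ON-OFF discrete-time Markov source with arrival $r$ in the ON state and $0$ in the OFF state, $\Lambda_a(\theta)=\log_e\rho(\theta;r)$, where $\rho(\theta;r)$ is the Perron (spectral-radius) eigenvalue of $\left(\begin{smallmatrix} p_{11} & p_{12}e^{\theta r}\\ p_{21} & p_{22}e^{\theta r}\end{smallmatrix}\right)$. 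Using $p_{12}=1-p_{11}$ and $p_{21}=1-p_{22}$, this root satisfies $\rho^2-\rho(p_{11}+p_{22}e^{\theta r})+(p_{11}+p_{22}-1)e^{\theta r}=0$. Since the wireless departure process is identical to the constant-rate setting, $\Lambda_c(-\theta)=-\theta\,C_E(\tsnr)$ with $C_E$ given by (\ref{eq:CE_HARQ_CC}); hence (\ref{eq:qos}) reads $\log_e\rho(\theta;r)=\theta\,C_E(\tsnr)$, which defines $r=r(\tsnr)$ implicitly, and the throughput is $r_{\avg}(\tsnr)=P_{ON}\,r(\tsnr)$.

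Next I would evaluate at $\tsnr=0$. Since $C_E(0)=0$, the right side vanishes, forcing $\rho=1$ and thus $r(0)=r_{\avg}(0)=0$. Writing $\phi(r):=\log_e\rho(\theta;r)$ and differentiating $\phi(r)=\theta\,C_E(\tsnr)$ once gives $\phi'(0)\dot r(0)=\theta\dot C_E(0)$, so $\dot r_{\avg}(0)=P_{ON}\theta\dot C_E(0)/\phi'(0)$. Implicit differentiation of the characteristic equation at $(\rho,r)=(1,0)$ yields $\phi'(0)=\rho'(0)=\theta P_{ON}$, so the factor $\theta P_{ON}$ cancels and $\dot r_{\avg}(0)=\dot C_E(0)$. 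This is precisely the constant-rate first derivative, giving $\ben_{\rm{min}}=1/\dot C_E(0)=\mu\log_e 2/\f_M(\varepsilon)$ exactly as in Theorem \ref{theo:discrete} and establishing the first claim.

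For $S_0$ I would differentiate the balance equation a second time, $\phi''(0)\dot r(0)^2+\phi'(0)\ddot r(0)=\theta\ddot C_E(0)$, solve for $\ddot r(0)$, and multiply by $P_{ON}$ to get $\ddot r_{\avg}(0)=\ddot C_E(0)-\phi''(0)\dot C_E(0)^2/(\theta P_{ON}^2)$, where $\phi''(0)=\rho''(0)-\rho'(0)^2$. A second round of implicit differentiation supplies $\rho''(0)$, and after substituting $P_{ON}=(1-p_{11})/(2-p_{11}-p_{22})$ the combination collapses to $\phi''(0)=\theta^2\zeta P_{ON}^2$ with $\zeta$ as in (\ref{eta}); hence $\ddot r_{\avg}(0)=\ddot C_E(0)-\theta\zeta\dot C_E(0)^2$. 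Inserting $\dot C_E(0)=\f_M(\varepsilon)/(\mu\log_e 2)$ and $\ddot C_E(0)=-\frac{\f_M(\varepsilon)^2}{\mu\log_e 2}\big(1+\frac{\sigma^2\theta}{\mu^2\log_e 2}\big)$, read from (\ref{eq:CE_HARQ_CC}), into (\ref{widebandslop}) causes the $\f_M(\varepsilon)$ factors to cancel and reduces the slope to $S_0=\frac{2\log_e 2}{(\sigma^2\theta+\mu^2\log_e 2)/\mu+\theta\zeta}$, as stated. The hard part will be the second-order implicit differentiation together with the algebra that forces $\rho''(0)-\rho'(0)^2$ into the clean form $\theta^2\zeta P_{ON}^2$; once that identity is in hand, the Markov case reduces entirely to the already-known constant-rate derivatives of $C_E$ plus the single correction term $\theta\zeta\dot C_E(0)^2$.
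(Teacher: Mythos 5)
Your proposal is correct and follows essentially the same route as the paper's proof: plug the Perron-root arrival LMGF and the constant-rate effective capacity into the balance equation $\Lambda_a(\theta)=\theta C_E(\tsnr)$, differentiate twice with respect to $\tsnr$ at zero to get $\dot r_{\avg}(0)=\dot C_E(0)$ and $\ddot r_{\avg}(0)=\ddot C_E(0)-\theta\zeta\,\dot C_E(0)^2$, and substitute into (\ref{bitenergy}) and (\ref{widebandslop}). The only cosmetic difference is that you extract $\rho'(0)=\theta P_{ON}$ and $\phi''(0)=\theta^2\zeta P_{ON}^2$ by implicit differentiation of the characteristic quadratic, whereas the paper differentiates the explicit closed-form expression of the Perron root; both identities check out, so the argument is sound.
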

\begin{proof}
See Appendix \ref{APD_2}.
\end{proof}

From Theorem \ref{theo:discrete2}, we observe that source randomness does not have any influence on the minimum energy per bit. The minimum energy per bit shown in Theorem \ref{theo:discrete2} is the same as in the case of constant-rate arrivals. Source burstiness has influence only on the wideband slope. Compared with the constant arrival case, there is an additional term $\theta\zeta$ in the denominator. Since both of $p_{11}$ and $p_{22}$ are between $0$ and $1$, it is easy to verify that $\theta\zeta\geq 0$, which means that source burstiness always degrades the wideband slope and makes the system less energy-efficient. When $P_{ON}=1$, we have $\zeta=0$, which corresponds to the constant arrival case, and the results in Theorem \ref{theo:discrete2} specialize to those in the case of the constant-rate arrivals.

\subsection{Energy Efficiency of HARQ-CC with ON-OFF Fluid Markov Source}\label{SC_FMS}
In this section, we consider the ON-OFF fluid Markov sources. Different from the discrete-time Markov source whose state does not change in a given time block and state transitions occur in discrete time steps, fluid Markov source may stay in a state over a continuous duration of time. In other words, the source can change its state at any time. Here, the definitions of ON and OFF states are the same as for the ON-OFF discrete-time source. The generating matrix of this continuous-time Markov process is given by
\begin{align}\label{eq:GMatrix}
\mathbf{G}=\left(
             \begin{array}{cc}
               -\alpha & \alpha \\
               \beta & -\beta \\
             \end{array}
           \right),
\end{align}
and the ON state probability is $P_{ON}=\frac{\alpha}{\alpha+\beta}$. In this case, the average arrival rate is
\begin{align}
r_{\text{avg}}=&\:r P_{ON}\notag\\
       =&\:r\frac{\alpha}{\alpha+\beta}.
\end{align}
Using a similar approach as for the discrete-time Markov source, we can find the minimum energy per bit and wideband slope for the ON-OFF fluid Markov source as in the following result.

\begin{Lem}\label{theo:discrete4}
For small QoS exponent $\theta$ and ON-OFF fluid Markov source, the minimum energy per bit and wideband slope of the HARQ-CC scheme with the outage constraint $\epsilon$ are given, respectively, by
\begin{align}
  &\ben_{\rm{min}}=\frac{\mu\;\log_e 2}{\f_M(\varepsilon)},\\
   &S_0=\frac{2\log_e 2}{\frac{\sigma^2 \theta+\mu^2 \log_e 2}{\mu}+\frac{2\theta\beta}{\alpha(\alpha+\beta)}},\label{eq:s0_MF}
\end{align}
where $\mu$ and $\sigma^2$ are given by (\ref{eq:mu_4}) and (\ref{eq:sgm_2}), respectively.
\end{Lem}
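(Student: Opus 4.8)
The plan is to follow the same route as in the proof of Theorem~\ref{theo:discrete2}: impose the queuing equilibrium condition (\ref{eq:qos}), in which the departure side is governed by the HARQ-CC effective capacity and the arrival side by the log-moment generating function (LMGF) of the ON--OFF fluid source. For a continuous-time Markov-modulated fluid with generator $\mathbf{G}$ in (\ref{eq:GMatrix}) and instantaneous rates collected in $\mathbf{R}=\mathrm{diag}(0,r)$, the asymptotic LMGF $\Lambda_a(\theta)$ equals the maximal real eigenvalue of $\mathbf{G}^{T}+\theta\mathbf{R}$; for this $2\times2$ generator it is the larger root of
\begin{align}
\lambda^{2}+(\alpha+\beta-\theta r)\,\lambda-\alpha\theta r=0. \label{plan:chareq}
\end{align}
Writing $\Lambda_c(-\theta)=-\theta\,C_E(\tsnr)$ and inserting the HARQ-CC effective capacity (\ref{eq:CE_HARQ_CC}), condition (\ref{eq:qos}) becomes $\Lambda_a(\theta)=\theta\,C_E(\tsnr)$, an implicit relation that determines $r$, and hence $r_{\avg}=r\,P_{ON}=r\,\tfrac{\alpha}{\alpha+\beta}$, as a function of $\tsnr$. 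Since $P_{ON}$ does not depend on $\tsnr$, the derivatives of $r_{\avg}$ and of $r$ differ only by this constant factor.

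Next I would extract the two derivatives of $r_{\avg}$ at $\tsnr=0$ by implicit differentiation of (\ref{plan:chareq}). At $\tsnr=0$ we have $R=0$, so $C_E=0$ and $\Lambda_a=0$, which forces the relevant root $r=0$. Differentiating (\ref{plan:chareq}) once with respect to $\tsnr$ and evaluating at $\tsnr=0$, the quadratic linearizes and yields $\dot\Lambda_a(\theta,0)=\tfrac{\alpha\theta}{\alpha+\beta}\,\dot r(\theta,0)$; combined with $\Lambda_a=\theta C_E$ this gives the clean identity $\dot r_{\avg}(\theta,0)=\dot C_E(\theta,0)$. Because the first derivative coincides with that of the constant-rate effective capacity, (\ref{bitenergy}) immediately returns $\ben_{\min}=1/\dot r_{\avg}(\theta,0)=\mu\log_e2/\f_M(\varepsilon)$, i.e.\ the same value as in Theorem~\ref{theo:discrete}, confirming that source randomness leaves the minimum energy per bit unchanged.

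For the slope I would differentiate (\ref{plan:chareq}) a second time, again setting $\tsnr=0$ and substituting the first-order relation. The algebra collapses to
\begin{align}
\ddot r_{\avg}(\theta,0)=\ddot C_E(\theta,0)-\frac{2\theta\beta}{\alpha(\alpha+\beta)}\big(\dot C_E(\theta,0)\big)^{2}, \label{plan:secondder}
\end{align}
where the second term is the burstiness correction contributed by the fluid source. Inserting (\ref{plan:secondder}) into the slope formula (\ref{widebandslop}), dividing numerator and denominator by $(\dot C_E(\theta,0))^{2}$, and using the constant-rate identities $\dot C_E(\theta,0)=\f_M(\varepsilon)/(\mu\log_e2)$ and $\ddot C_E(\theta,0)/(\dot C_E(\theta,0))^{2}=-(\sigma^{2}\theta+\mu^{2}\log_e2)/\mu$ implied by Theorem~\ref{theo:discrete} then produces (\ref{eq:s0_MF}).

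The main obstacle is the second implicit differentiation of the eigenvalue relation (\ref{plan:chareq}): one must carefully track the products $\Lambda_a r$ and $(\dot\Lambda_a)^{2}$ and use $\Lambda_a=0$, $r=0$ at $\tsnr=0$ so that the surviving cross terms consolidate into the single correction factor $\tfrac{2\theta\beta}{\alpha(\alpha+\beta)}$ appearing in (\ref{eq:s0_MF}). Establishing that $\Lambda_a(\theta)$ is indeed given by the dominant root of (\ref{plan:chareq}) for the fluid source is the other ingredient that distinguishes this argument from the discrete-time case.
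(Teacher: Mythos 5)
Your proposal is correct and follows essentially the same route as the paper's proof: the paper likewise plugs the ON--OFF fluid source LMGF (quoted in closed form, which is exactly the larger root of your characteristic quadratic) into the equilibrium condition (\ref{eq:qos}), differentiates twice at $\tsnr=0$, and arrives at the same intermediate identities $\dot{r}_{\text{avg}}(0)=\dot{C}_E(0)$ and $\ddot{r}_{\text{avg}}(0)=\ddot{C}_E(0)-\frac{2\theta\beta}{\alpha(\alpha+\beta)}\dot{C}_E^2(0)$ before substituting into (\ref{bitenergy}) and (\ref{widebandslop}). The only cosmetic difference is that you differentiate the characteristic polynomial implicitly, whereas the paper differentiates the explicit square-root expression for $\Lambda_a(\theta)$.
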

\begin{proof}
See Appendix \ref{APD_3}.
\end{proof}

Similar to the ON-OFF discrete-time Markov source, we notice that source burstiness does not change the minimum energy per bit, and it only results in the addition of the positive term $\frac{2\theta\beta}{\alpha(\alpha+\beta)}$ in the denominator of the wideband slope expression in (\ref{eq:s0_MF}). When $P_{ON}=1$, arrival rates become constant, and this additional term vanishes. Therefore, source burstiness has again a negative influence on the energy efficiency.

\subsection{Energy Efficiency of HARQ-CC with ON-OFF Markov Modulated Poisson Sources (MMPS)}
In this subsection, we investigate the energy efficiency of ON-OFF MMPS models whose arrival rates are described as a Poisson process with intensity $\nu$ in the ON state while there is no arrival in the OFF state. State transitions are governed by a continuous-time Markov chain as in the Markov fluid model. However, compared to the ON-OFF Markov fluid source analyzed in Section \ref{SC_FMS}, MMPS can be seen to have a higher degree of burstiness since its arrival rate, rather than being a constant, is random in the ON state. Here, the expressions of the generating matrix and ON state probability are the same as in Section \ref{SC_FMS}. In this case, the average arrival rate is
\begin{align}\label{eq:ravg_MMPS}
  r_{\text{avg}}=&\:\nu P_{ON}\notag\\
       =&\:\nu\frac{\alpha}{\alpha+\beta},
\end{align}
where $\nu$ is the Poisson intensity in the ON state. The following result identifies the the minimum energy per bit and wideband slope for the ON-OFF MMPS models.

\begin{Lem}\label{theo:MMPS1}
For small QoS exponent $\theta$ and ON-OFF MMPS, the minimum energy per bit and wideband slope of the HARQ-CC scheme with the outage constraint $\epsilon$ are given, respectively, by
\begin{align}
  &\ben_{\rm{min}}=\frac{e^\theta-1}{\theta}\:\frac{\mu\;\log_e 2}{\f_M(\varepsilon)},\\
  &S_0=\frac{\theta}{e^\theta-1}\:\frac{2\log_e 2}{\frac{\sigma^2 \theta+\mu^2 \log_e 2}{\mu}+\frac{2\theta\beta}{\alpha(\alpha+\beta)}},
\end{align}
where $\mu$ and $\sigma^2$ are given by (\ref{eq:mu_4}) and (\ref{eq:sgm_2}), respectively.
\end{Lem}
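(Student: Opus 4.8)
The plan is to reduce this result to the ON-OFF Markov fluid case already established in Theorem \ref{theo:discrete4}, exploiting the fact that the two sources differ only in the per-state contribution to the arrival log-moment generating function (LMGF). First I would compute $\Lambda_a(\theta)$ for the MMPS. Since the modulating chain is the continuous-time two-state process with generator $\mathbf{G}$ in (\ref{eq:GMatrix}) and the arrivals in the ON state form a Poisson process of intensity $\nu$, standard results for Markov-modulated Poisson processes give $\Lambda_a(\theta)$ as the largest real (Perron--Frobenius) eigenvalue of
\begin{align}
\mathbf{G}+\begin{pmatrix} 0 & 0 \\ 0 & \nu(e^{\theta}-1)\end{pmatrix},
\end{align}
because a Poisson stream of rate $\nu$ contributes the cumulant term $\nu(e^{\theta}-1)$ per unit time in the ON state and $0$ in the OFF state. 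For the Markov fluid source of Section \ref{SC_FMS}, the identical construction yields the largest eigenvalue of $\mathbf{G}+\mathrm{diag}(0,\theta r)$, the only difference being that the deterministic ON-state rate $r$ enters through $\theta r$ rather than through $\nu(e^{\theta}-1)$.

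Next I would make this comparison precise. Writing the ON-state parameters in terms of the throughput via $\nu = r_{\avg}\frac{\alpha+\beta}{\alpha}$ for MMPS and $r = r_{\avg}\frac{\alpha+\beta}{\alpha}$ for the fluid source, the two arrival LMGFs are the \emph{same} function of their respective arguments $\nu(e^{\theta}-1)$ and $\theta r$. Because the departure-side term $\Lambda_c(-\theta)$ in (\ref{eq:qos}) is governed solely by the (common) HARQ-CC transmission process and is therefore identical for both source models, the fixed-point equation $\Lambda_a(\theta)+\Lambda_c(-\theta)=0$ forces the two ON-state arguments to coincide at every $\tsnr$, since the relevant eigenvalue is strictly increasing in its argument (it equals $0$ at argument $0$ and grows without bound) and is hence invertible. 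That is, $(e^{\theta}-1)\,r_{\avg}^{\text{MMPS}} = \theta\,r_{\avg}^{\text{MF}}$, or equivalently
\begin{align}
r_{\avg}^{\text{MMPS}}(\tsnr)=\frac{\theta}{e^{\theta}-1}\,r_{\avg}^{\text{MF}}(\tsnr)\qquad\text{for all }\tsnr.
\end{align}

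Since the proportionality constant $\frac{\theta}{e^{\theta}-1}$ does not depend on $\tsnr$, it passes directly through the derivatives at $\tsnr=0$, giving $\dot r_{\avg}^{\text{MMPS}}(0)=\frac{\theta}{e^{\theta}-1}\dot r_{\avg}^{\text{MF}}(0)$ and $\ddot r_{\avg}^{\text{MMPS}}(0)=\frac{\theta}{e^{\theta}-1}\ddot r_{\avg}^{\text{MF}}(0)$. Substituting these into (\ref{bitenergy}) and (\ref{widebandslop}) scales $\ben_{\min}$ by $\frac{e^{\theta}-1}{\theta}$ and $S_0$ by $\frac{\theta}{e^{\theta}-1}$ relative to the Markov fluid expressions of Theorem \ref{theo:discrete4}, reproducing exactly the claimed formulas. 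The main obstacle is the first step: rigorously justifying the MMPS LMGF and confirming that it reduces to the fluid LMGF under the single replacement $\theta r \leftrightarrow \nu(e^{\theta}-1)$ within the paper's discrete-block timing convention. Once that correspondence is secured, the remaining algebra is routine and the derivative-scaling argument delivers the result with essentially no further computation.
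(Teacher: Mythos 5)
Your proof is correct, but it takes a genuinely different route from the paper's. The paper's Appendix D argues directly: it plugs the closed-form MMPS LMGF into $\Lambda_a(\theta)+\Lambda_c(-\theta)=0$, solves for the Poisson intensity as an explicit function of the effective capacity, $\nu(\tsnr)=\frac{\theta[\theta C_E(\tsnr)+\alpha+\beta]}{(e^\theta-1)[\theta C_E(\tsnr)+\alpha]}C_E(\tsnr)$, writes the throughput as the composite function in (\ref{eq:ravg2_MMPS}), and differentiates it twice at $\tsnr=0$ before substituting $\dot{C}_E(0)$ and $\ddot{C}_E(0)$ from the proof of Theorem \ref{theo:discrete}. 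You instead observe that the MMPS and Markov-fluid arrival LMGFs are one and the same strictly increasing function (the largest eigenvalue of $\mathbf{G}+\mathrm{diag}(0,x)$, vanishing at $x=0$) evaluated at $x=\nu(e^\theta-1)$ and $x=\theta r$ respectively, so the common fixed-point equation forces these arguments to coincide, yielding the exact global identity $r_{\avg}^{\text{MMPS}}(\tsnr)=\frac{\theta}{e^\theta-1}\,r_{\avg}^{\text{MF}}(\tsnr)$, after which the theorem follows from Theorem \ref{theo:discrete4} by scaling both derivatives. Your reduction is sound: the monotonicity claim holds since $(x-\alpha-\beta)^2+4\alpha x-(x+\alpha-\beta)^2=4\alpha\beta>0$, your eigenvalue characterization reproduces exactly (\ref{eq:Eb_proof4}) and (\ref{eq:Eb_MMPS1}), and it rests on no more than the paper itself assumes (the same cited LMGF results). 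What your route buys: it exposes the factor $\frac{\theta}{e^\theta-1}$ as an exact proportionality of throughputs at every $\tsnr$ rather than an artifact of the $\tsnr\to0$ limit, it delivers $\ben_{\rm{min}}$ and $S_0$ in one stroke, and it silently corrects a typo in the paper's intermediate step (\ref{eq:MMPS_proof2}), whose last term should read $\frac{2\beta\theta^2}{\alpha(\alpha+\beta)(e^\theta-1)}\dot{C}_E^2(0)$ (missing the $\alpha$); both your scaling argument and direct differentiation of (\ref{eq:ravg2_MMPS}) confirm this, and the theorem statement is consistent only with the corrected expression. What the paper's route buys: an explicit throughput formula in terms of $C_E(\tsnr)$ that is valid beyond the low-$\tsnr$ regime, and no reliance on the invertibility argument.
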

\begin{proof}
See Appendix \ref{APD_4}.
\end{proof}

Comparing the results of Theorems \ref{theo:discrete4} and \ref{theo:MMPS1}, we notice that Poisson arrival model leads to the introduction of the additional factor of $\frac{\theta}{e^\theta-1}$ in the expressions of the minimum energy per bit and wideband slope. For $\theta\geq0$, we have $\frac{\theta}{e^\theta-1}\leq1$, resulting in a larger minimum energy per bit and smaller wideband slope for the ON-OFF MMPS compared to those for the ON-OFF Markov fluid source. Since the factor $\frac{\theta}{e^\theta-1}$ is a decreasing function of $\theta$, the performance gap grows further as the queuing constraint gets stricter. Moreover, as a stark contrast to the observations in Sections \ref{SC_DMS} and \ref{SC_FMS}, the minimum energy per bit depends on $\theta$ when MMPS arrival model is considered.

\section{Energy Efficiency of HARQ-CC Scheme with Optimal Transmission Rate}\label{sec:energy-eff-optimal-rate}
In this section, instead of maintaining a fixed outage probability, we study the energy efficiency of the HARQ-CC scheme with the optimal transmission rate, which maximizes the effective capacity. Because the fixed rate only has influence on the LMGF of the departure process, or equivalently the effective capacity, it is very easy to verify that the optimal transmission rate that maximizes the effective capacity also maximizes the average arrival rate and hence the system throughput. It can be easily seen that as $R \rightarrow0$, effective capacity $C_E(\tSNR) \rightarrow 0$. Moreover, as $R\rightarrow \infty$, then we again have $C_E(\tSNR) \rightarrow 0$ since transmission failures after $M$ HARQ rounds and hence outage events occur more and more frequently with increasing $R$, lowering the throughput.
Therefore, there exists a finite optimal rate, $\R(\tsnr)$, which maximizes the effective capacity. Assume that $\R(\tsnr)$ has the following first-order expansion at $\tSNR=0$
\begin{align}
\R(\tsnr)=a\:\tsnr+o(\tsnr)
\end{align}
where $a$ is the value of the first derivative of $\R(\tsnr)$ with respect to $\tSNR$ at $\tSNR = 0$. Given this optimal transmission rate, the outage probability can be expressed as
\begin{align}\label{e2}
  \varepsilon(\tsnr)=&\Pr\left\{\log_{2}\left(1+\tsnr\sum_{i=1}^{M}z_i\right)<\R(\tsnr) \right\}\notag\\
                    =&F_M\left(\frac{2^{\R(\tsnr)}-1}{\tsnr} \right),
\end{align}
whose limit as $\tSNR$ vanishes is
\begin{equation}
  \lim_{\tsnr\rightarrow 0}\varepsilon(\tsnr)=F_M(a \log_e2).
\end{equation}
The expression in (\ref{e2}) shows that the outage probability is a monotonic increasing function of the transmission rate for fixed $\tSNR$. Hence, we can see that searching for the optimal rate for a certain $\tSNR$ is equivalent to searching for the optimal outage probability.

We initially start with the constant-rate arrival model. In this case, the throughput $r_{\text{avg}}$ is equal to the effective capacity $C_E$. Now, given the optimal rate, we can characterize the effective capacity for small $\theta$ as
\begin{equation}\label{eq:C_E_opt_rate}
  C_E(\tsnr)=\frac{\R(\tsnr)}{\mu}-\frac{(\R(\tsnr))^2\sigma^2\theta}{2\mu^3}.
\end{equation}
In order to find $\mu$ and $\sigma^2$, we first derive the probability $\Pr\{\hat{T}=k\,M+t\}$ by following similar steps as in Section \ref{sec:EE_fixed_outage}:
\begin{align}
\small
\begin{split}
&\Pr\{\hat{T}=kM+t\}= (\Pr\{T>M\})^k \Pr\{T=t\} \\
&= \varepsilon ^k \Pr\{T=t\} \\
&= \varepsilon ^k \Bigg\{\!\! \Pr\left(\sum_{i=1}^{t}z_i\geqslant\frac{2^{\R(\ssnr)}\!-\!1}{\ssnr} \!\right)\!-\!\Pr\left(\sum_{i=1}^{t-1}z_i\geqslant\frac{2^{\R(\ssnr)}\!-\!1}{\ssnr} \!\right) \!\!\Bigg\} \\
&= \varepsilon ^k \left(\text{F}_{t-1}\left(\f_M(\varepsilon)\right)-\text{F}_{t}\left(\f_M(\varepsilon)\right)\right)
\end{split}
\normalsize
\end{align}
where $\varepsilon$ is given by (\ref{e2}).
\begin{Rem}
If we further assume Rayleigh fading channel, then the probability mass function (pmf) of $\hat{T}$ can be simplified according to the relationship between Poisson distribution and chi-square distribution \cite{poisson}. This is similar to Remark \ref{rem1}, and the simplified result is given by
\begin{align}
\Pr\{\hat{T}=kM+t\}= \varepsilon ^k \frac{\tilde{\lambda}^{t-1}}{(t-1)!}\rme^{-\tilde{\lambda}},\label{eq:pmfT}
\end{align}
where $\tilde{\lambda}(\tsnr)=\frac{2^{\R(\ssnr)}-1}{\E\{z\}\ssnr}$ and $\lim_{\tsnr\rightarrow 0}\tilde{\lambda}(\tsnr)=\frac{a}{\E\{z\}}\log_e 2$.
\end{Rem}

By plugging the pmf in (\ref{eq:pmfT}) into (\ref{eq:mu_4}) and (\ref{eq:sgm_2}), $\mu$ and $\sigma^2$ can be found, respectively. Different from the analysis in Section \ref{sec:EE_fixed_outage}, $\mu$ and $\sigma^2$ now depend on $\tSNR$. In the following, we provide a characterization of the minimum energy per bit for the constant-rate arrival model.

\begin{Lem} \label{theo:discrete3}
For small QoS exponent $\theta$ and constant-rate arrivals, the minimum energy per bit of the HARQ-CC scheme with optimal transmission rate is given by
\begin{align} \label{eq:EbN0_min_opt_rate}
  \ben_{\rm{min}}=&\frac{\mu(0)}{a}
\end{align}
where $\mu(0)$ is the value of $\mu (\tSNR)$ evaluated at zero $\tsnr$, $\mu (\tSNR)$ is given in (\ref{eq:mu_4}), and $a$ is the first derivative of $\R(\tsnr)$ with respect to $\tsnr$ at $\tsnr = 0$.
\end{Lem}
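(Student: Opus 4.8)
The plan is to reduce the entire computation to a single first derivative of the effective capacity at zero \tsnr. Since we are in the constant-rate arrival regime, the throughput coincides with the effective capacity, $r_{\text{avg}}(\tsnr)=C_E(\tsnr)$, and because the optimal rate satisfies $\R(0)=0$ we have $C_E(0)=0$. Hence, by the definition of the minimum energy per bit in (\ref{bitenergy}),
\begin{align}
\ben_{\rm{min}}=\lim_{\tsnr\rightarrow 0}\frac{\tsnr}{C_E(\tsnr)}=\frac{1}{\dot{C}_E(0)},
\end{align}
where the dot denotes differentiation with respect to \tsnr. It therefore suffices to differentiate the closed-form expression (\ref{eq:C_E_opt_rate}) and evaluate at $\tsnr=0$. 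The first step I would record is the first-order behavior of the optimal rate: from the expansion $\R(\tsnr)=a\,\tsnr+o(\tsnr)$ we read off $\R(0)=0$ and $\dot{\R}(0)=a$. This vanishing of $\R$ at the origin is the structural fact that makes the calculation collapse.

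Next I would differentiate (\ref{eq:C_E_opt_rate}) term by term. For the first term $\R/\mu$, the quotient rule gives $(\dot{\R}\mu-\R\dot{\mu})/\mu^2$; at $\tsnr=0$ the piece containing $\dot{\mu}(0)$ is multiplied by $\R(0)=0$ and drops out, leaving $\dot{\R}(0)/\mu(0)=a/\mu(0)$. For the second term $\R^2\sigma^2\theta/(2\mu^3)$, writing $g=\sigma^2\theta/(2\mu^3)$ its derivative is $2\R\dot{\R}g+\R^2\dot{g}$; both summands carry at least one factor of $\R$, so the whole derivative vanishes at $\tsnr=0$ by $\R(0)=0$. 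Combining the two contributions yields $\dot{C}_E(0)=a/\mu(0)$, and therefore $\ben_{\rm{min}}=\mu(0)/a$, as claimed.

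The main obstacle is conceptual rather than computational: unlike in Section \ref{sec:EE_fixed_outage}, here $\mu$ and $\sigma^2$ themselves depend on \tsnr through $\varepsilon(\tsnr)$ (and, in the Rayleigh case, through $\tilde{\lambda}(\tsnr)$), so one might fear that the messy derivatives of $\mu$ and $\sigma^2$ enter $\ben_{\rm{min}}$. The plan sidesteps these entirely by exploiting that every such derivative is multiplied by a power of $\R$ that vanishes at the origin, so only the value $\mu(0)$ survives. Along the way I would note that $\mu(0)$ is finite, since $\varepsilon(0)=F_M(a\log_e 2)<1$ and the sum in (\ref{eq:mu_4}) consists of finitely many bounded terms, and that $a\neq 0$, so that $\ben_{\rm{min}}$ is well defined.
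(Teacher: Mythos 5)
Your proposal is correct and follows essentially the same route as the paper's Appendix E: differentiate the closed-form expression (\ref{eq:C_E_opt_rate}), use $\R(0)=0$ and $\dot{R}^{\ast}(0)=a$ so that every term carrying $\dot{\mu}$ or $\dot{\sigma^2}$ is annihilated by a factor of $\R$, and conclude $\dot{C}_E(0)=a/\mu(0)$, hence $\ben_{\rm{min}}=\mu(0)/a$ via (\ref{bitenergy}). The only cosmetic difference is that the paper writes out the full quotient-rule expansion and cites L'Hospital's rule, whereas you organize the cancellation term by term; the underlying cancellation argument is identical.
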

\begin{proof}
See Appendix \ref{APD_5}.
\end{proof}

For the case of the optimal rate scheme, we note that most of the analysis remains in the same form as that of the fixed outage probability case, and the only difference is that the outage probability is a function of $\tsnr$. Next, we consider Markov source models.

\begin{Lem}\label{theo:discrete5}
For small QoS exponent $\theta$, the minimum energy per bit of the HARQ-CC scheme with optimal transmission rate is given by
\begin{align} \label{eq:EbN0_min_opt_rate_Markov}
  \ben_{\rm{min}}=&\frac{\mu(0)}{a}
\end{align}
for both discrete-time and fluid ON-OFF Markov sources, and is given by
\begin{align} \label{eq:EbN0_min_opt_MMPS}
  \ben_{\rm{min}}=&\frac{e^\theta-1}{\theta}\frac{\mu(0)}{a}
\end{align}
for the ON-OFF MMPS.

In the above expressions, $\mu(0)$ is the value of $\mu (\tSNR)$ evaluated at zero $\tsnr$, $\mu (\tSNR)$ is given in (\ref{eq:mu_4}), and $a$ is the first derivative of $\R(\tsnr)$ with respect to $\tsnr$ at $\tsnr = 0$.
\end{Lem}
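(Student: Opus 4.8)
The plan is to reduce Theorem~\ref{theo:discrete5} to the already-established constant-rate result in Theorem~\ref{theo:discrete3} by exploiting the structure uncovered in the earlier fixed-outage analysis (Theorems~\ref{theo:discrete}--\ref{theo:MMPS1}). The key observation is that in every source model the minimum energy per bit is determined solely by the \emph{first} derivative $\dot r_{\text{avg}}(\theta,0)$, via $\ben_{\rm{min}}=1/\dot r_{\text{avg}}(\theta,0)$. So the entire proof hinges on computing $\dot r_{\text{avg}}(\theta,0)$ for each of the three random-arrival models under the optimal-rate scheme, and comparing it against the effective-capacity derivative $\dot C_E(\theta,0)$ which is governed by Theorem~\ref{theo:discrete3}.

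First I would set up the defining QoS equation $\Lambda_a(\theta)+\Lambda_c(-\theta)=0$ for each source. The departure side $\Lambda_c(-\theta)$ is common to all cases and, by the effective-capacity definition, satisfies $C_E(\tsnr)=-\tfrac{1}{\theta}\Lambda_c(-\theta)$ with the expansion in (\ref{eq:C_E_opt_rate}); under the optimal rate $\R(\tsnr)=a\,\tsnr+o(\tsnr)$ this gives $\dot C_E(\theta,0)=a/\mu(0)$, exactly as in Theorem~\ref{theo:discrete3}. On the arrival side I would write $\Lambda_a(\theta)$ as a function of the average rate $r_{\text{avg}}$. For the discrete-time and fluid ON-OFF Markov sources, the LMGF is linear in $r_{\text{avg}}$ to first order, so differentiating the QoS equation with respect to $\tsnr$ at $\tsnr=0$ forces $\dot r_{\text{avg}}(\theta,0)=\dot C_E(\theta,0)=a/\mu(0)$, yielding (\ref{eq:EbN0_min_opt_rate_Markov}). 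The MMPS case is where the extra factor appears: the Poisson-in-ON-state structure makes $\Lambda_a(\theta)$ depend on $r_{\text{avg}}$ through a term proportional to $(e^{\theta}-1)/\theta$ rather than linearly, so solving the first-order QoS balance introduces the prefactor $(e^{\theta}-1)/\theta$, giving (\ref{eq:EbN0_min_opt_MMPS}).

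The main obstacle will be handling the $\tsnr$-dependence of $\mu$ and $\sigma^2$, which is the genuinely new feature relative to Section~\ref{sec:EE_fixed_outage}. Because the outage probability $\varepsilon(\tsnr)$ now varies with $\tsnr$ through $\R(\tsnr)$, the quantities $\mu(\tsnr)$ and $\sigma^2(\tsnr)$ are no longer constants, so I must be careful that only the \emph{zeroth-order} value $\mu(0)$ enters the minimum-energy-per-bit expression. The key simplification is that $\ben_{\rm{min}}$ depends only on the first derivative of the throughput, so the derivatives of $\mu(\tsnr)$ and $\sigma^2(\tsnr)$ contribute only at higher order and drop out; I would verify this by evaluating $\tilde\lambda(\tsnr)\to \tfrac{a}{\E\{z\}}\log_e 2$ at $\tsnr=0$ and confirming that $\mu(0)$ is finite and nonzero.

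Finally, I would confirm that the arrival-side LMGF structure is indeed identical to the fixed-outage case for each source type, so that the $\zeta$-term (discrete Markov) and the $\tfrac{2\theta\beta}{\alpha(\alpha+\beta)}$-term (fluid and MMPS) enter only at second order and therefore affect only $S_0$, not $\ben_{\rm{min}}$. This mirrors the earlier finding in Theorems~\ref{theo:discrete2}--\ref{theo:MMPS1} that source burstiness leaves $\ben_{\rm{min}}$ unchanged (except for the MMPS prefactor), and it guarantees that the three optimal-rate expressions collapse to $\mu(0)/a$ up to the Poisson factor. The bulk of the remaining work is the routine first-derivative computation of the QoS balance equation, which I would relegate to the Appendix.
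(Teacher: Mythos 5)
Your proposal is correct and follows essentially the same route as the paper: the paper's proof likewise observes that the first-derivative relations $\dot{r}_{\text{avg}}(0)=\dot{C}_E(0)$ (discrete-time and fluid Markov) and $\dot{r}_{\text{avg}}(0)=\frac{\theta}{e^\theta-1}\dot{C}_E(0)$ (MMPS), obtained by differentiating the QoS balance equation in the fixed-outage analysis, carry over unchanged to the optimal-rate setting, and then inserts $\dot{C}_E(0)=a/\mu(0)$ from the constant-rate optimal-rate theorem before applying $\ben_{\rm{min}}=1/\dot{r}_{\text{avg}}(0)$. Your additional check that $\dot{\mu}(\tsnr)$ and $\dot{\sigma^2}(\tsnr)$ drop out (because they multiply factors of $\R(\tsnr)$, which vanishes at $\tsnr=0$) is exactly the content of the paper's cited computation of $\dot{C}_E(0)$.
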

\begin{proof}
See Appendix \ref{APD_6}.
\end{proof}


From Theorems \ref{theo:discrete3} and \ref{theo:discrete5}, we conclude that source burstiness does not affect the minimum energy per bit for both ON-OFF discrete Markov and ON-OFF Markov fluid sources. On the other hand, for the ON-OFF MMPS, the Poisson property leads to the presence of the factor $\frac{e^\theta-1}{\theta}$ in the expression of the minimum energy per bit, indicating lower energy efficiency compared to that of ON-OFF Markov fluid sources. This is a similar observation as in the fixed outage probability case and can again be attributed to the more bursty nature of MMPS.

\section{Numerical Results}\label{sec:numerical}
In this section, we present numerical results to illustrate the energy efficiency of HARQ-CC in the presence of QoS constraints. In the first subsection, numerical results for the constant-rate arrival model are provided to demonstrate the influence of the deadline constraint $M$ and outage probability $\varepsilon$. In the second subsection, we concentrate on the impact of random arrivals and source burstiness. Within this section, unless mentioned explicitly, QoS exponent is set to $\theta = 0.1$.
\subsection{Constant Arrival Models}
In this subsection, we analyze the energy efficiency of HARQ-CC scheme with fixed transmission rate and constant arrival rate. Making use of the characterizations in \cite{HARQ}, we also numerically evaluate the performance of HARQ-IR. In the simulations, we consider Rayleigh fading channel with exponentially distributed fading power having a mean value of $\E\{z\}=1$.

In Fig. \ref{fig1}, we plot the maximum average arrival rate $r_{\text{avg}}$ (or equivalently throughput) as a function of the energy per bit $\ben$ for HARQ-CC and HARQ-IR schemes under two different outage constraints $\epsilon$ and deadline constraints $M$. Since the expected value $\mu$ and variance $\sigma^2$ of the random transmission time are not available in closed-form for HARQ-IR, throughput for this case is evaluated numerically. Analytical throughput curves for HARQ-CC are also validated via Monte Carlo simulations with $20 \times 10^6$ samples. We notice that analytical and simulation results agree perfectly. In the figure, it is seen that HARQ-CC and HARQ-IR schemes approach the same minimum energy per bit under the same outage and deadline constraints. An intuitive explanation of this observation is that for vanishingly small $x$, we have $\log_2(1+x) \sim x\log_2 e$. Hence, for low $\tsnr$ values, we have $\sum_i\log_2\left(1+\tsnr\;z_i\right)\sim \tsnr\sum_i\;z_i\log_2 e$ and  $\log_2\left(1+\tsnr\sum_i z_i\right) \sim \tsnr\sum_i z_i\log_2 e$, indicating that these two HARQ schemes are expected to have similar performances at vanishingly small $\tsnr$ values.  We also observe that HARQ-IR has a higher wideband slope. Hence, at low but nonzero values of $\tsnr$, HARQ-IR provides better energy efficiency compared to HARQ-CC.

\begin{figure}
\center
\includegraphics[width=\figsize\textwidth]{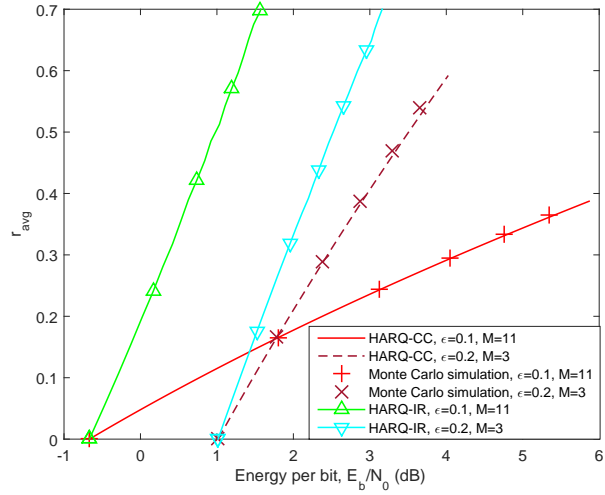}
\caption{Maximum average arrival rate $r_{\text{avg}}$ vs. energy per bit $\ben$}\label{fig1}
\end{figure}

In Fig. \ref{fig2}, we display the minimum energy per bit $\ben_{\rm{min}}$ and wideband slope $S_0$ of HARQ-CC as a function of the outage probability constraint $\epsilon$ for three different values of the deadline constraint $M$. Recall that minimum energy per bit does not depend on the QoS constraints while wideband slope does. We consider two different QoS exponents for the wideband slope. For higher values of $\theta$ under the same deadline constraint $M$, we have smaller wideband slopes as expected since higher values of $\theta$ indicate stricter QoS constraints. It is observed from the figure that the minimum energy per bit first decreases with increasing $\epsilon$ and then starts increasing after a certain threshold point. On the other hand, wideband slope always decreases with increasing $\epsilon$.
\begin{figure}
\begin{centering}
\center\includegraphics[width=\figsize\textwidth]{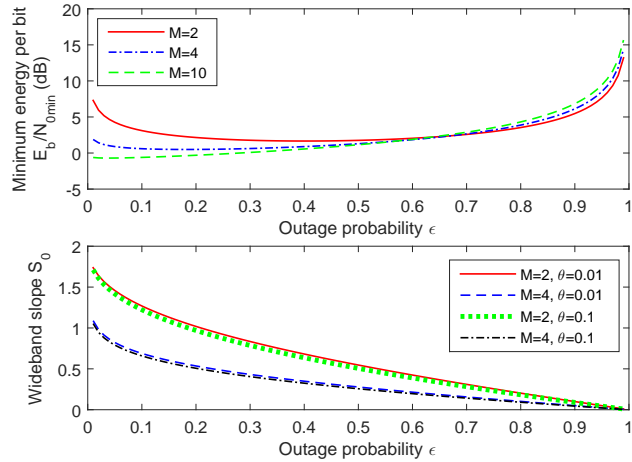}
\caption{Minimum energy per bit $\ben_{\rm{min}}$ and wideband slope $S_0$ vs. outage probability $\epsilon$}\label{fig2}
\end{centering}
\end{figure}
\begin{figure}
\begin{centering}
\center
\includegraphics[width=\figsize\textwidth]{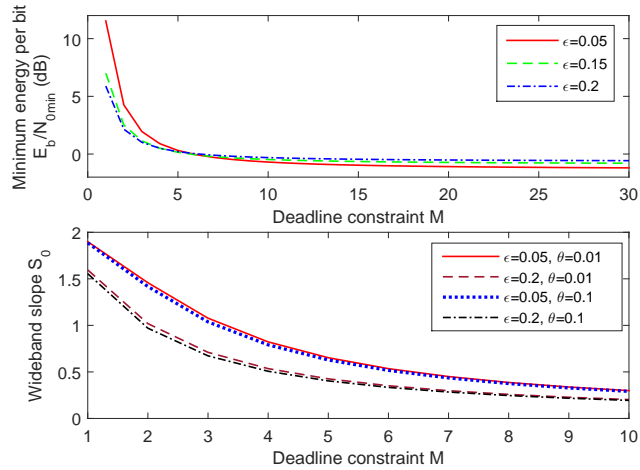}
\caption{Minimum energy per bit $\ben_{\rm{min}}$ and wideband slope $S_0$ vs. deadline constraint $M$}\label{fig3}
\end{centering}
\end{figure}
In Fig. \ref{fig3}, the minimum energy per bit and wideband slope are plotted as a function of the deadline constraint $M$ for the HARQ-CC scheme. It is seen that both the minimum energy per bit and wideband slope decrease with increasing $M$. Hence, by reducing the minimum energy per bit, relaxed deadline constraints lead to improvements in energy efficiency in the vicinity of $\ben_{\rm{min}}$.


In Fig. \ref{fig6},  we display the maximum average arrival rate $r_{\text{avg}}$ as a function of the energy per bit $\ben$ for HARQ-CC with fixed outage probability and also HARQ-CC with optimal transmission rate. In the case of fixed outage probability, the outage probability $\varepsilon$ is chosen such that $\ben_{\rm{min}}$ is smallest. In the other case, the optimal transmission rate which maximizes the effective capacity is chosen. As noted before, this is actually equivalent to optimizing the outage probability. Therefore, as expected, when $\tsnr\rightarrow 0$, both HARQ-CC with the optimal transmission rate and HARQ-CC with fixed outage probability achieve the same minimum energy per bit. However, HARQ-CC with the optimal transmission rate has a higher wideband slope. Therefore, when $\tsnr$ is small but nonzero, HARQ-CC with the optimal transmission rate outperforms and provides better energy efficiency.

\begin{figure}
\center
\includegraphics[width=\figsize\textwidth]{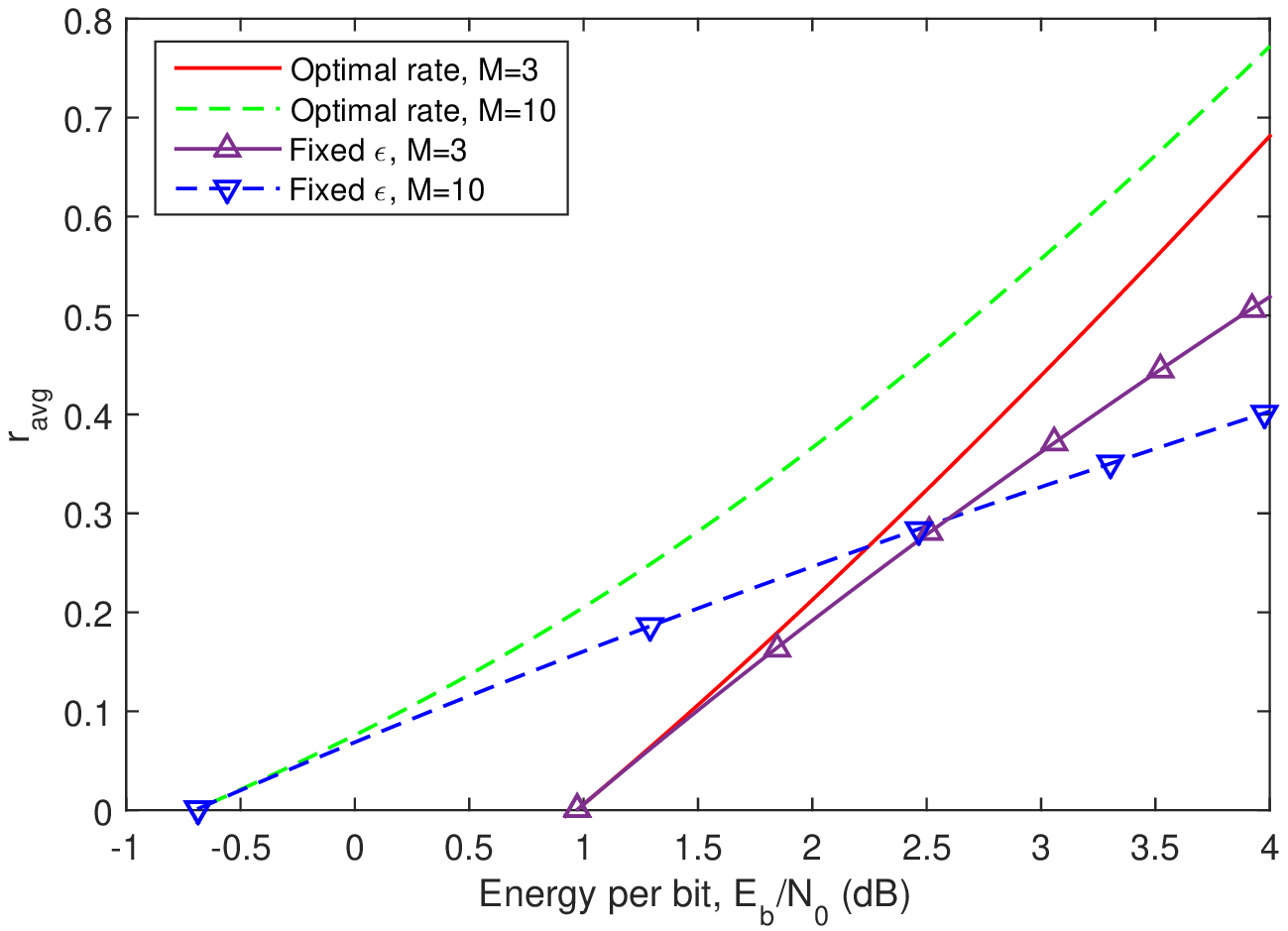}
\caption{Maximum average arrival rate $r_{\text{avg}}$ vs. energy per bit $\ben$}\label{fig6}
\end{figure}

\subsection{Random Arrival Models}
\begin{figure}
\center
\includegraphics[width=\figsize\textwidth]{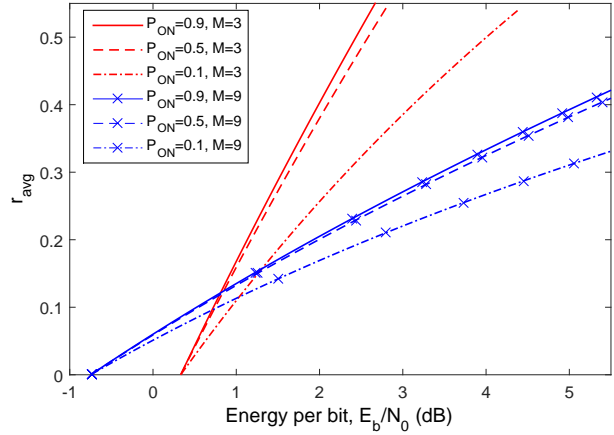}
\caption{Maximum average arrival rate $r_{\text{avg}}$ vs. energy per bit $\ben$ for ON-OFF discrete-time Markov source with fixed outage probability $\varepsilon=0.1$}\label{fig7}
\end{figure}
\begin{figure}
\center
\includegraphics[width=\figsize\textwidth]{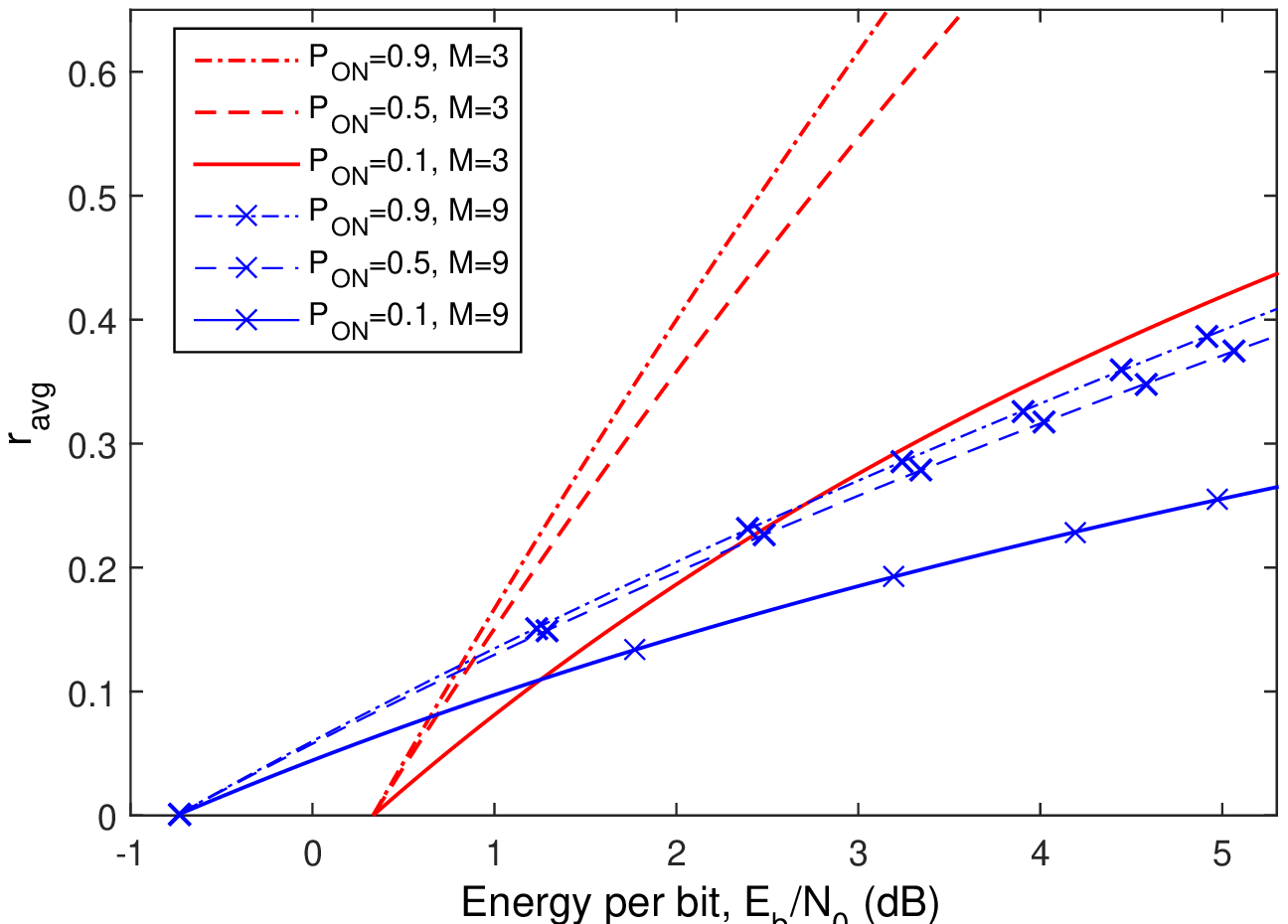}
\caption{Maximum average arrival rate $r_{\text{avg}}$ vs. energy per bit $\ben$ for ON-OFF Markov fluid source with fixed outage probability $\varepsilon=0.1$}\label{fig8}
\end{figure}
\begin{figure}
\center
\includegraphics[width=\figsize\textwidth]{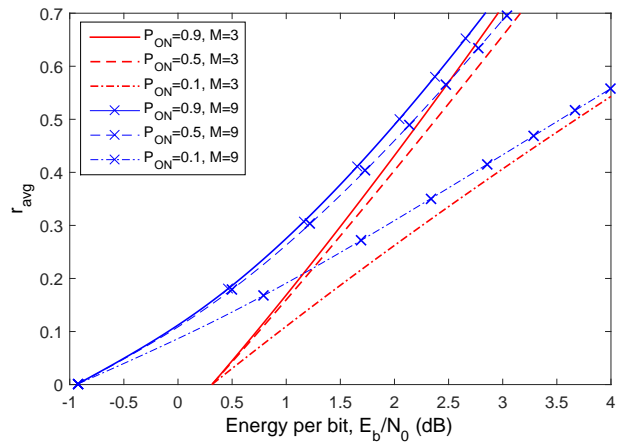}
\caption{Maximum average arrival rate $r_{\text{avg}}$ vs. energy per bit $\ben$ for ON-OFF discrete-time Markov source with optimal transmission rate}\label{fig9}
\end{figure}
\begin{figure}
\center
\includegraphics[width=\figsize\textwidth]{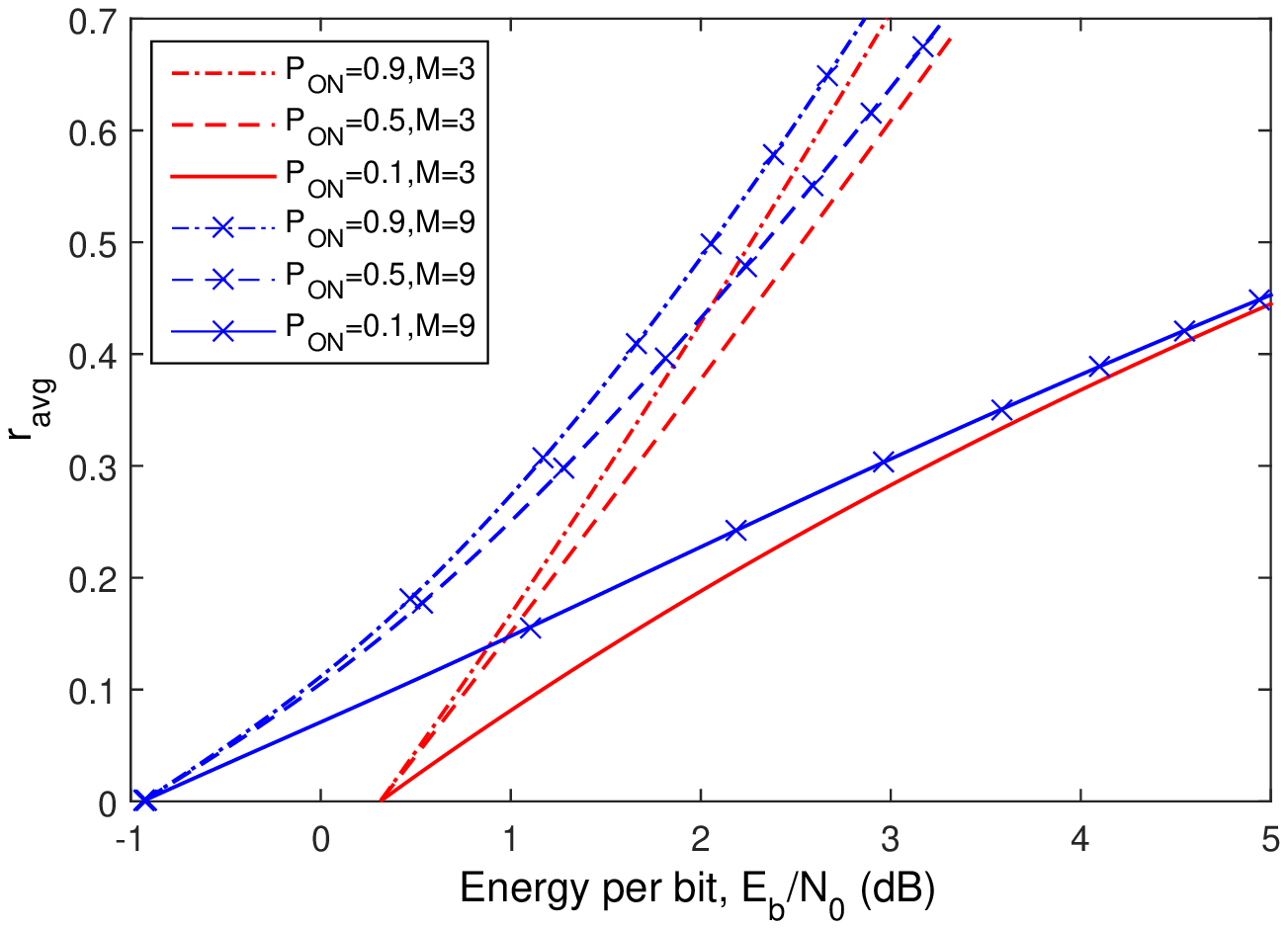}
\caption{Maximum average arrival rate $r_{\text{avg}}$ vs. energy per bit $\ben$ for ON-OFF Markov fluid source with optimal transmission rate}\label{fig10}
\end{figure}

In this section, we investigate the impact of source randomness/burstiness on the energy efficiency. Within this subsection, we assume a Nakagami-$m$ fading channel with $m=2$. Also, we assume $\E\{z\}=1$. For all fixed outage probability results, we fix $\varepsilon=0.1$.

Fig. \ref{fig7}--Fig. \ref{fig10} demonstrate the influence of source burstiness considering both ON-OFF discrete-time Markov and Markov fluid sources. Here we use the ON state probability $P_{ON}$ as a measure of source burstiness. We set $p_{11}+p_{22}=1$ in the discrete-time Markov source model, and $\alpha+\beta=1$ for the Markov fluid source. Under these assumptions, the Markov source with higher $P_{ON}$ has less burstiness.

First, we observe that minimum energy per bit does not depend on source burstiness, which has been proved analytically in previous sections. Also, we note that source burstiness makes the wideband slope smaller in both discrete-time and fluid models. With the same average arrival rate, the source with smaller $P_{ON}$ can have much larger arrival rate in the ON state, which makes it more difficult to satisfy the queuing constraint.

From Theorems \ref{theo:discrete2} and \ref{theo:discrete4}, we have seen that the impact of source burstiness and channel conditions are essentially separated. For both discrete-time and fluid models, the denominator of the wideband slope expressions in these theorems can be divided into two components. One can be called as the arrival component, which only depends on the arrival process, and the other can be called as the departure component, which only depends on the departure process at the buffer. In the discrete-time model, the arrival component is $\theta\zeta$, and the departure component is $\frac{\sigma^2 \theta+\mu^2 \log_e 2}{\mu}$; for Markov fluid model, the arrival component is $\frac{2\theta\beta}{\alpha(\alpha+\beta)}$, and the departure component is $\frac{\sigma^2 \theta+\mu^2 \log_e 2}{\mu}$. When $P_{ON}=1$, our random arrival model specializes to the constant-rate arrival model, and the formulas become the same as in the constant-rate arrival case, because the arrival components are equal to $0$. This observation implies that our analysis on the influence of deadline constraints and outage probability for the constant-rate arrival case is applicable to the random arrival model. From Fig. \ref{fig7} through Fig. \ref{fig10}, we notice that both the minimum energy per bit and wideband slope decrease as the deadline constraint $M$ increases, which agrees with our analysis in the previous subsection.

Although the impact of source burstiness and channel conditions are separate, we observe in Figs. \ref{fig7} and \ref{fig8} that the influence of source burstiness becomes smaller when $M$ increases. Since both of the arrival and departure components are in the denominator, when one component increases, the other one becomes less significant. When $M$ increases, the departure component becomes larger for both discrete-time and fluid models. Therefore, we note that the effect of source burstiness becomes deemphasized as $M$ increases from $3$ to $9$.

In Figs. \ref{fig9} and \ref{fig10}, we observe that for different $M$ values, the throughput curves with the same $P_{ON}$ are very close to each other. This implies that for sufficiently high SNR, the effect of the deadline constraint becomes very small. Since the outage probability approaches $0$, when the transmitter has enough energy to complete the transmission within a short time, deadline constraint becomes insignificant.
\begin{figure}
\center
\includegraphics[width=\figsize\textwidth]{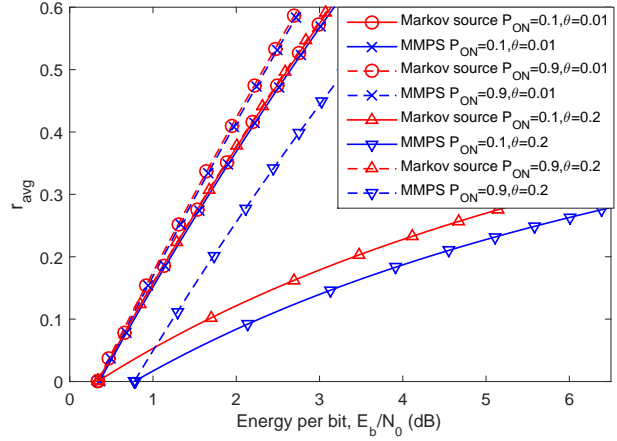}
\caption{Maximum average arrival rate $r_{\text{avg}}$ vs. energy per bit $\ben$ for ON-OFF Markov fluid source and MMPS with fixed outage probability $\varepsilon=0.1$}\label{fig11}
\end{figure}

\begin{figure}
\center
\includegraphics[width=\figsize\textwidth]{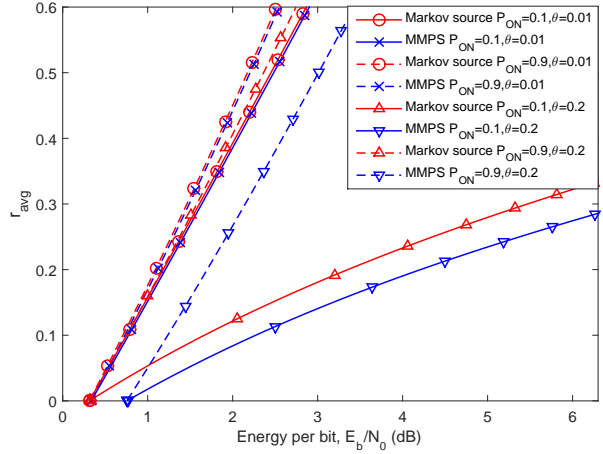}
\caption{Maximum average arrival rate $r_{\text{avg}}$ vs. energy per bit $\ben$ for ON-OFF Markov fluid source and MMPS with optimal transmission rate}\label{fig12}
\end{figure}

Finally, in Figs. \ref{fig11} and \ref{fig12}, we compare the performances of ON-OFF Markov fluid source and MMPS for both cases of fixed outage probability and optimal transmission rates. As mentioned in previous sections, compared to the the minimum energy per bit and wideband slope of the ON-OFF Markov fluid source, the corresponding results for MMPS are scaled by the factor $\frac{e^\theta-1}{\theta}$ and its reciprocal, respectively. When $\theta$ is close to $0$, both $\frac{e^\theta-1}{\theta}$ and its reciprocal approach $1$. For this reason, the throughput curves of ON-OFF Markov fluid source and MMPS stay very close to each other in both figures when $\theta=0.01$. As $\theta$ increases, the factor $\frac{e^\theta-1}{\theta}$ grows, which leads to larger gap between the throughput curves of these two types of Markov sources. For instance, we can easily observe from Figs. \ref{fig11} and \ref{fig12} that there is a $0.44\;\text{dB}$ difference between the corresponding minimum energy per bit values when $\theta=0.2$.

\section{Conclusion}\label{sec:conc}
In this paper, we have analyzed the energy efficiency of the HARQ-CC scheme under outage, deadline, and statistical queuing constraints in the low-power and low-$\theta$ regimes by employing the notions of effective capacity and effective bandwidth from the stochastic network calculus while considering both constant-rate and random data arrivals to the buffer. First, we have determined the minimum energy per bit and wideband slope achieved with HARQ-CC for fixed outage probability and both constant-rate and Markov source models. From the results, we have shown that source burstiness does not affect the minimum energy per bit when ON-OFF discrete time and  Markov fluid sources are considered. On the other hand, due to the Poisson arrivals and the resulting higher level of burstiness, MMPS is shown to have worse energy efficiency compared to the ON-OFF Markov fluid source. Moreover, among the considered arrival models, MMPS is the only source for which the minimum energy per bit depends on the QoS exponent $\theta$ and grows with stricter QoS constraints. In contrast to the characterizations regarding the minimum energy per bit, we have shown that wideband slope in all cases varies with the QoS exponent $\theta$ and source statistics. For instance, stricter queuing constraints (i.e., larger values of $\theta$) and increased source burstiness tend to lower the wideband slope, incurring loss in the energy efficiency. The impact of source burstiness is clearly identified with additional terms introduced in the denominators of the wideband slope expressions.

For the case in which the transmission rate is optimized, we have determined the minimum energy per bit in terms of the mean transmission time and transmission rate characteristics at vanishingly small $\tsnr$ levels. Constant-rate, discrete-time Markov, and Markov fluid arrival models all lead to the same minimum energy per bit while MMPS results in a larger $\ben_{\min}$ value.

Through numerical results, we have illustrated that while HARQ-IR and HARQ-CC achieve the same minimum energy per bit, HARQ-IR outperforms HARQ-CC at low but nonzero $\tsnr$ levels due to having a higher wideband slope. In the numerical analysis, we have also investigated the interactions between deadline constraints, target outage probability, QoS constraints, source burstiness, and energy efficiency.

\begin{appendices}
\section{Proof of Theorem \ref{theo:discrete}}\label{APD_1}
\begin{proof}
In order to derive the minimum energy per bit and wideband slope expressions, we need to obtain the first and second derivatives of $r_{\text{avg}}(\tSNR)$ with respect to $\tSNR$ at zero $\tSNR$. For the constant-rate arrival model, $r_{\text{avg}}$ is given by (\ref{eq:CE_HARQ}). In this regard, the first and second derivatives of $r_{\text{avg}}(\tSNR)$ with respect to $\tSNR$, are given, respectively, by
\begin{align}
  \dot{r}_{\text{avg}}(\tSNR)=&\frac{\f_M(\varepsilon)}{(1+\f_M(\varepsilon)\tsnr)\mu\log_e 2}\notag\\
                 &-\frac{\f_M(\varepsilon)\theta\sigma^2\left(\log_e (1+\f_M(\varepsilon)\tsnr)\right)^2}{(1+\f_M(\varepsilon)\tsnr)\mu^3(\log_e 2)^2},
\end{align}
\begin{align}
  \ddot{r}_{\text{avg}}(\tSNR)=&\frac{\left(\f_M(\varepsilon)\right)^2\theta\sigma^2}{(1+\f_M(\varepsilon)\tsnr)^2\mu^3(\log_e 2)^2}\notag\\
                  &-\frac{\left(\f_M(\varepsilon)\right)^2}{(1+\f_M(\varepsilon)\tsnr)^2\mu\log_e 2}\notag\\
                  &+\frac{\left(\f_M(\varepsilon)\right)^2\theta\sigma^2\log_e(1+\f_M(\varepsilon)\tsnr)}{(1+\f_M(\varepsilon)\tsnr)^2\mu^3(\log_e 2)^2}.
\end{align}
Then, taking the limit as $\tsnr\rightarrow 0$ results in the following expressions:
\begin{equation} \label{eq:firstder_Ec_0}
  \dot{r}_{\text{avg}}(0)=\frac{\f_M(\varepsilon)}{\mu\log_e 2},
\end{equation}
and
\begin{equation} \label{eq:secondder_Ec_0}
  \ddot{r}_{\text{avg}}(0)=-\frac{\left(\f_M(\varepsilon)\right)^2(\theta\sigma^2+\mu^2\log_e 2)}{\mu^3(\log_e 2)^2}.
\end{equation}
Inserting the expressions in (\ref{eq:firstder_Ec_0}) and (\ref{eq:secondder_Ec_0}) into (\ref{bitenergy}) and (\ref{widebandslop}), the minimum bit energy in (\ref{eq:min_Eb}) and wideband slope in (\ref{eq:min_S0}) are readily obtained.
\end{proof}

\section{Proof of Theorem \ref{theo:discrete2}}\label{APD_2}
\begin{proof}
From \cite{chang_book} and \cite{HARQ}, the LMGF of the arrival process and the effective capacity of the departure process are given, respectively, by
\begin{align}
\begin{cases}\label{eq:LMGF_1}
\Lambda_a(\theta)=\log_e\left(\frac{p_{11}+p_{22}e^{r\theta}+\sqrt{(p_{11}+p_{22}e^{r\theta})^2-4(p_{11}+p_{22}-1)e^{r\theta}}}{2}\right)\\
C_E(\tsnr)=\frac{R}{\mu}-\frac{R^2\sigma^2\theta}{2\mu^3}.
\end{cases}
\end{align}
Note that $C_E=-\frac{1}{\theta}\Lambda_c(-\theta)$ is a linear function of the LMGF of the departure process. Plugging the characterizations in (\ref{eq:LMGF_1}) into (\ref{eq:qos}), we obtain
\begin{align}\label{eq:qos2}
\frac{1}{2}\bigg(p_{11}+p_{22}e^{r\theta}+\sqrt{(p_{11}+p_{22}e^{r\theta})^2-4(p_{11}+p_{22}-1)e^{r\theta}}\bigg)\notag\\
\hspace{2cm}=e^{\theta C_E(\tsnr)}.
\end{align}
Then, by taking the derivative of both sides with respect to $\tsnr$ and evaluating as $\tsnr\rightarrow 0$, we have
\begin{equation}\label{eq:Eb_proof1}
\dot{r}(0)\theta\left(\frac{p_{22}}{2}+\frac{p_{22}(p_{11}+p_{22})-2(p_{11}+p_{22}-1)}{2(2-p_{11}-p_{22})}\right)=\theta\dot{C}_E(0).
\end{equation}
In determining (\ref{eq:Eb_proof1}), we have used the fact that $\lim_{\tsnr\rightarrow 0}r(\tsnr)=0$ and $\lim_{\tsnr\rightarrow 0}C_E(\tsnr)=0$. Note that when the transmit power approaches $0$, the departure rate should also go to $0$, which in turn makes the effective capacity approach $0$. To satisfy the queuing constraints, the arrival rate $r$ in the ON state should also diminish to $0$. From (\ref{eq:Eb_proof1}), we get
\begin{align}
\dot{r}(0)=&\dot{C}_E(0)\bigg/\left[\frac{p_{22}}{2}+\frac{p_{22}(p_{11}+p_{22})-2(p_{11}+p_{22}-1)}{2(2-p_{11}-p_{22})}\right] \\
         =&\dot{C}_E(0)/P_{ON}.
\end{align}
In the proof of Theorem \ref{theo:discrete}, we have shown that $\dot{C}_E(0)=\frac{\f_M(\varepsilon)}{\mu \log_e 2}$. Therefore, we can have the first order derivative of the throughput evaluated as $\tsnr$ goes to $0$ as
\begin{align}
\dot{r}_{\text{avg}}(0)=&\dot{r}(0) P_{ON} \\
                =&\dot{C}_E(0)\label{eq:Eb_proof8}\\
                =&\frac{\f_M(\varepsilon)}{\mu \log_e 2}.\label{eq:Eb_proof2}
\end{align}
Similarly, by taking the second order derivatives of both sides of (\ref{eq:qos2}) with respect to $\tsnr$ and evaluating as $\tsnr\rightarrow 0$, we obtain
\begin{align}
\ddot{r}(0)=\frac{\theta\dot{C}_E(0)^2+\ddot{C}_E(0)-\dot{C}_E(0)^2\theta(\zeta+1)}{P_{ON}}
\end{align}
where $\zeta$ is defined in (\ref{eta}).
In the proof of Theorem \ref{theo:discrete}, we show that $\ddot{C}_E(0)=-\frac{\f_M(\varepsilon)^2(\theta\sigma^2+\mu^2\log_e 2)}{\mu^3(\log_e 2)^2}$. Therefore, we can find
\begin{align}
\ddot{r}_{\text{avg}}(0)=&\ddot{r}(0) P_{ON}\\
                 =&\frac{\f_M(\varepsilon)^2(-\theta\zeta-\frac{\theta\sigma^2+\mu^2\log_e 2}{\mu})}{(\mu\log_e 2)^2}.\label{eq:Eb_proof3}
\end{align}
Inserting the results in (\ref{eq:Eb_proof2}) and (\ref{eq:Eb_proof3}) into (\ref{bitenergy}) and (\ref{widebandslop}), we get the desired results shown in Theorem \ref{theo:discrete2}.
\end{proof}

\section{Proof of Theorem \ref{theo:discrete4}}\label{APD_3}
\begin{proof}
The proof is similar to the proof of Theorem \ref{theo:discrete2}. From \cite{EB_Markov}, the LMGF of the arrival process of the ON-OFF Markov fluid source is given by
\begin{align}\label{eq:Eb_proof4}
\Lambda_a(\theta)=\frac{1}{2}\bigg(\theta r-\alpha-\beta+\sqrt{(\theta r-\alpha-\beta)^2+4\alpha\theta r}\bigg).
\end{align}
Plugging (\ref{eq:Eb_proof4}) into (\ref{eq:qos}), taking the first and second order derivatives and evaluating as $\tsnr\rightarrow 0$, we get
\begin{align}
\dot{r}(0)=\dot{C}_E(0)/P_{ON},
\end{align}
Using $\dot{r}(0)$, we get $\dot{r}_{\text{avg}}(0)$ as
\begin{align}
\dot{r}_{\text{avg}}(0)=&\dot{r}(0) P_{ON}\notag\\
                =&\dot{C}_E(0) \label{eq:Eb_proof9}\\
                =&\frac{\f_M(\varepsilon)}{\mu\log_e 2}.\label{eq:Eb_proof5}
\end{align}
Furthermore, we have
\begin{align}
\ddot{r}_{\text{avg}}(0)=&\ddot{r}(0)\frac{\alpha}{\alpha+\beta}\\
=&\ddot{C}_E(0)-\dot{C}_E^2(0)\theta\frac{2\beta}{\alpha(\alpha+\beta)}\\
                 =&-\left(\frac{\f_M(\varepsilon)}{\mu\log_e 2}\right)^2\left(\frac{\theta\sigma^2+\mu^2\log_e 2}{\mu}+\frac{2\theta\beta}{\alpha(\alpha+\beta)}\right).\label{eq:Eb_proof6}
\end{align}
Inserting the results in (\ref{eq:Eb_proof5}) and (\ref{eq:Eb_proof6}) into (\ref{bitenergy}) and (\ref{widebandslop}), we obtain the desired results in Theorem \ref{theo:discrete4}.
\end{proof}

\section{Proof of Theorem \ref{theo:MMPS1}}\label{APD_4}
\begin{proof}
From \cite{EB_Markov}, the LMGF of the arrival process of the ON-OFF MMPS is given by
\begin{align}\label{eq:Eb_MMPS1}
\Lambda_a(\theta)=&\frac{1}{2}\left[(e^\theta-1)\nu-(\alpha+\beta)\right]\notag\\
                  &\;+\frac{1}{2}\sqrt{\left[(e^\theta-1)\nu-(\alpha+\beta)\right]^2+4\alpha\nu(e^\theta-1)}.
\end{align}
Plugging (\ref{eq:Eb_MMPS1}) into (\ref{eq:qos}), we can find \begin{equation}\label{eq:MMPS_lmd}
\nu(\tSNR)=\frac{\theta\left[\theta C_E(\tsnr)+\alpha+\beta\right]}{(e^\theta-1)\left[\theta C_E(\tsnr)+\alpha\right]}C_E(\tsnr).
\end{equation}
Inserting (\ref{eq:MMPS_lmd}) into (\ref{eq:ravg_MMPS}), the throughput can be expressed as
\begin{align}\label{eq:ravg2_MMPS}
r_{\text{avg}}(\tsnr)=&\nu\:P_{ON}\notag\\
=&P_{ON}\frac{\theta\left[\theta C_E(\tsnr)+\alpha+\beta\right]}{(e^\theta-1)\left[\theta C_E(\tsnr)+\alpha\right]}C_E(\tsnr).
\end{align}
Taking the first and second order derivatives and evaluating as $\tsnr\rightarrow 0$, we get
\begin{align}
\dot{r}_{\text{avg}}(0)=&P_{ON}\frac{\theta\alpha(\alpha+\beta)}{\alpha^2(e^\theta-1)}\dot{C}_E(0)\\
          =&\frac{\theta}{e^\theta-1}\dot{C}_E(0), \label{eq:MMPS_proof1}
\end{align}
and
\begin{align}\label{eq:MMPS_proof2}
\ddot{r}_{\text{avg}}(0)=\frac{\theta}{e^\theta-1}\ddot{C}_E(0)-\frac{2\beta\theta^2}{(\alpha+\beta)(e^\theta-1)}\dot{C}_E^2(0)
\end{align}
where $\dot{C}_E(0)=\frac{\f_M(\varepsilon)}{\mu \log_e 2}$ and $\ddot{C}_E(0)=-\frac{\f_M(\varepsilon)^2(\theta\sigma^2+\mu^2\log_e 2)}{\mu^3(\log_e 2)^2}$. Inserting the results in (\ref{eq:MMPS_proof1}) and (\ref{eq:MMPS_proof2}) into (\ref{bitenergy}) and (\ref{widebandslop}), we obtain the desired results in Theorem \ref{theo:MMPS1}.
\end{proof}

\section{Proof of Theorem \ref{theo:discrete3}}\label{APD_5}
\begin{proof}
In order to determine the minimum energy per bit, we take the first derivative of $C_E(\tsnr)$ in (\ref{eq:C_E_opt_rate}) with respect to $\tSNR$ and express it as
\small
\begin{align}\label{eq:CE_der_opt_rate}
\hspace{-.4cm}\dot{C}_E(\tsnr)=\frac{\dot{R}^{\ast}(\tsnr)\mu(\tsnr)-\dot{\mu}(\tsnr)\R(\tsnr)}{(\mu(\tsnr))^2}-\frac{1}{4(\mu(\tsnr))^6} \bigg{[}2&(\mu(\tsnr))^3(2\R(\tsnr)\dot{R}^{\ast}(\tsnr)\sigma^2(\tsnr)\theta+(\R(\tsnr))^2\dot{\sigma^2}(\tsnr)\theta)\notag\\
                   &\hspace{2cm}-6(\mu(\tsnr))^2\dot{\mu}(\tsnr)(\R(\tsnr))^2\sigma^2(\tsnr)\theta\bigg{]},
\end{align}
\normalsize
where $\dot{R}^{\ast}(\tsnr)$, $\dot{\mu}(\tsnr)$ and $\dot{\sigma^2}(\tsnr)$ denote the first derivatives of $\R(\tsnr)$, $\mu(\tsnr)$, and $\sigma^2(\tsnr)$ with respect to $\tsnr$.
Next, we evaluate $\dot{C}_E(\tsnr)$ in (\ref{eq:CE_der_opt_rate}) at $\tsnr=0$. By exploiting the facts that $\R(\tsnr) \rightarrow 0$ when $\tsnr \rightarrow 0$, and $\dot{R}^{\ast}(0)=a$, and applying L'Hospital's Rule, we can express, after some simplifications, $\dot{C}_E(0)$ as
\begin{equation}\label{eq:Eb_proof7}
  \dot{C}_E(0)=\frac{a}{\mu(0)}.
\end{equation}
For the constant-rate arrival model, we have $r_{\text{avg}}=C_E$. Therefore, inserting the above result into (\ref{bitenergy}) provides the minimum energy per bit expression in (\ref{eq:EbN0_min_opt_rate}).
\end{proof}

\section{Proof of Theorem \ref{theo:discrete5}}\label{APD_6}
\begin{proof}
For ON-OFF dicrete Markov and Markov fluid sources, it is very easy to verify that (\ref{eq:Eb_proof8}) and (\ref{eq:Eb_proof9}) are still valid, and the only step we need to perform is to insert (\ref{eq:Eb_proof7}) into (\ref{eq:Eb_proof8}) and (\ref{eq:Eb_proof9}), which gives
\begin{align}
\dot{r}_{\text{avg}}(0)=\frac{a}{\mu(0)}
\end{align}
for both discrete and fluid ON-OFF Markov sources. Inserting this expression into (\ref{bitenergy}) proves (\ref{eq:EbN0_min_opt_rate_Markov}).

We can also verify that (\ref{eq:MMPS_proof1}) is still valid for MMPS. Inserting (\ref{eq:Eb_proof7}) into (\ref{eq:MMPS_proof1}), we obtain
\begin{align}
\dot{r}_{\text{avg}}(0)=\frac{\theta}{e^\theta-1}\frac{a}{\mu(0)}
\end{align}
and
\begin{align}
\ben_{\rm{min}}=&\frac{1}{\dot{r}_{\text{avg}}(0)}\\
=&\frac{e^\theta-1}{\theta}\frac{\mu(0)}{a},
\end{align}
proving (\ref{eq:EbN0_min_opt_MMPS}).
\end{proof}
\end{appendices}

\end{spacing}

\bibliographystyle{ieeetr}
\bibliography{HARQ}

\end{document}